\journal{pr} 
\newcommand{\cj}{\mathrm{j}}
\newcommand{\e}{\mathrm{e}}
\newtheorem{theorem}{Theorem}
\newtheorem{corollary}{Corollary}
\newtheorem{definition}{Definition}
\title{Provable Routing Analysis of Programmable Photonics}
\author[1,*]{Zhengqi Gao}
\author[2,3]{Xiangfeng Chen}
\author[1]{Zhengxing Zhang}
\author[1]{Chih-Yu Lai}
\author[1]{Uttara Chakraborty}
\author[2,3]{Wim Bogaerts}
\author[1]{Duane S. Boning}
\affil[1]{Microsystems Technology Laboratories, Electrical Engineering and Computer Science, Massachusetts Institute of Technology, Cambridge MA 02139, USA}
\affil[2]{Ghent University - IMEC, Department of Information Technology, Gent, Belgium}
\affil[3]{Center of Nano- and Biophotonics, Ghent University, Gent, Belgium}
\affil[*]{Corresponding author: Zhengqi Gao}
\begin{abstract}
Programmable photonic integrated circuits (PPICs) are an emerging technology recently proposed as an alternative to custom-designed application-specific integrated photonics. Light routing is one of the most important functions that need to be realized on a PPIC. Previous literature has investigated the light routing problem from an algorithmic or experimental perspective, e.g., adopting graph theory to route an optical signal. In this paper, we also focus on the light routing problem, but from a complementary and theoretical perspective, to answer questions about what is possible to be routed. Specifically, we demonstrate that not all path lengths (defined as the number of tunable basic units that an optical signal traverses) can be realized on a square-mesh PPIC, and a rigorous realizability condition is proposed and proved. We further consider multi-path routing, where we provide an analytical expression on path length sum, upper bounds on path length mean/variance, and the maximum number of realizable paths. All of our conclusions are proven mathematically. Illustrative potential optical applications using our observations are also presented.
\end{abstract}
\begin{document}

\maketitle

\section{Introduction}\label{sec:intro}
Over the past two decades, photonic integrated circuits (PICs) have been demonstrated in a growing number of applications and fields, such as data communications, quantum computing, and optical beam-steering ~\cite{soref2006silicon,wim_silicon_challenge,shen2017deep,madsen2022quantum,arrazola2021quantum,kim2019opa}. Usually, a PIC is designed for one particular application, which is commonly referred to as application-specific. For every new application, engineers must design and fabricate a new photonic circuit from scratch. Depending on the technology, this cycle can take one year or more. Recently, programmable photonic integrated circuits (PPICs)~\cite{Perez16Reconfigurable,wang2022programmable,Zhuang15programmable,capmany2016programmable,bogaerts2020programmable,perez2017multipurpose,perez2020multipurpose,gao2022automatic1,gao2022automatic2,chen2020graph} have emerged as an alternative paradigm, exploiting reconfigurability to avoid the redesign workload. Specifically, a PPIC is made up of a mesh of so-called \textit{tunable basic units} (TBUs), and each TBU has two actively controlled optical tuners (e.g., electro-optic phase shifters). By tuning these individual actuators, the flow of light in a PPIC can be reconfigured to realize various linear light processing functions, such as splitting, interfering, routing, and filtering~\cite{Perez16Reconfigurable,perez2017multipurpose,perez2020multipurpose}. 

Depending on the connectivity of these TBUs, a PPIC can be categorized into two types: feed-forward and loop-back (recirculating) topologies~\cite{bogaerts2020programmable,gao2022automatic2}. Some special feed-forward topologies~\cite{clements2016optimalforward,reck1994experimental} have been proven capable of realizing any unitary transformations, and thus have become popular as an acceleration engine to implement matrix-vector multiplications for optical neural networks~\cite{shen2017deep}. On the other hand, loop-back topologies~\cite{perez2017multipurpose} have the ability to redirect light in any direction in the circuit, and implement tunable delay lines, interferometric filters, and ring resonators. They are more versatile and can be useful in more optical applications compared to feed-forward topologies. The most common configurations of a loop-back PPIC are a triangular, square, or hexagonal mesh~\cite{Perez16Reconfigurable,bogaerts2020programmable}, and these are also the main focus in our paper.

One of the most important functions for a PPIC to realize is light routing. Some earlier published papers~\cite{chen2020graph,lopez2020auto} model a PPIC using a directed/undirected graph, and solve the routing problem using existing graph algorithms. These efforts are oriented towards answering the question of how to route, while in our paper, we study (and answer) a series of complementary questions related to what can be routed or not. Our analysis is performed based on the metric of path length, which is defined as the number of TBUs in the optical path~\cite{lopez2020auto}. When an optical signal traverses through a PPIC, its phase change and time delay is closely related to its path length. Thus, investigating path length will provide us a good understanding of the routing ability of PPICs and will be instructive in many optical applications. As an example, if a PPIC can route at most $y$ (e.g., $y=5$) optical signals such that their path lengths are all equal, then this indicates that at most $y$ optical signals can go through this PPIC while maintaining their relative phase differences. Such a conclusion is of crucial interest in a phase sensitive optical application. An immediate question is: What is the maximum value of $y$ for a given PPIC? Our paper focuses on answering such questions. All of our findings will be supported by mathematical proofs. Finally, a number of illustrative potential optical applications using our observations are explored.

The paper is organized as follows. In Section~\ref{sec:preliminary}, we briefly review the compact model of the TBU, formally make several definitions (such as floating node and path length), and describe several axiomatic conclusions. Next, in Section~\ref{sec:routing}, we present our major results about the routing ability of a general square mesh, covering both the cases of single-path and multi-path routing. In Section~\ref{sec:applications}, we present a few applications inspired by our findings. Finally, we conclude our paper with Section~\ref{sec:conclusion}. Our main text only focuses on square meshes; analyses of triangular and hexagonal meshes are considered in the Appendix.


\section{Preliminary}\label{sec:preliminary}

For this study, we focus on an $N\times M$ square mesh (see the Appendix for other mesh topologies), made up of TBUs arranged along the horizontal and vertical edges, as shown in Fig.~\ref{fig:undirected_simple_graph_representation}(a-b). The TBU can be implemented in different ways~\cite{bogaerts2020programmable}, but the most common implementation is based on variants of a $2 \times 2$ Mach–Zehnder interferometer (MZI)~\cite{bogaerts2020programmable,perez2017multipurpose}. In this paper, we consider the same TBU implementation as in~\cite{gao2022automatic2}. An MZI has two inputs and two outputs, and their values are related by a transfer matrix. Specifically, the two output optical signals equal the product of the following $2\times2$ transfer matrix with the input optical signals:

\begin{equation}\label{eq:our_compact_S}
        \mathbf{F} =     {0.5\left[
    \begin{array}{cc}
        \e^{-\cj\theta}-\e^{-\cj\phi} &  -\cj\e^{-\cj\theta}-\cj\e^{-\cj\phi} \\
        -\cj\e^{-\cj\theta}-\cj\e^{-\cj\phi} &   -\e^{-\cj\theta}+\e^{-\cj\phi}
    \end{array}\right]} \alpha\e^{-\cj\omega\frac{n_{\text{eff}}L}{c}}
\end{equation}
where $\{\theta,\phi\}$ are the active phase shifts of the TBU which are tuned by electric signals, $\omega$ is the angular frequency of the optical carrier wave, $\alpha$ represents the TBU loss, $n_{\text{eff}}=n_{\text{eff}}(\omega)$ represents the effective index of the propagating mode in the waveguides, $c$ is the free space light speed, and $L$ is the length of the waveguide in the TBU.

There are two special cases of primary interest: (i)~bar state: $\theta=0$ and $\phi=\pi$, and (ii)~cross state: $\theta=\phi=-0.5\pi$. The resulting $\mathbf{F}$ for these two cases are respectively shown below:
\begin{equation}\label{eq:cross_bar}
\begin{aligned}
    &\text{bar state: }\; &\mathbf{F} = \left[
    \begin{array}{cc}
        1 & 0 \\
        0 &  -1
    \end{array}
    \right]\alpha\e^{-\cj\omega\frac{n_{\text{eff}}L}{c}}
    \\
    &\text{cross state: }\; &\mathbf{F} = \left[
    \begin{array}{cc}
        0 & 1 \\
        1 & 0
    \end{array}
    \right]\alpha\e^{-\cj\omega\frac{n_{\text{eff}}L}{c}},\quad 
\end{aligned}
\end{equation}
For a horizontal TBU in the bar state, an optical signal going in from the top left port will be guided to its top right port (i.e., confined in the same arm). Alternatively, for that TBU in the cross state, an optical signal going in from its top left port will be guided to the bottom right port. When a square-mesh PPIC is used merely to route light, all TBUs are either set to cross state or bar state. Thus, following this convention, we assume all TBUs are either in cross state or bar state in our paper, and no partial coupling is allowed. Now, we formally define the graph representation of a square-mesh PPIC, where the ports of the TBUs are represented by nodes.

\begin{figure}[!htb]
    \centering
    \includegraphics[width=1.0\linewidth]{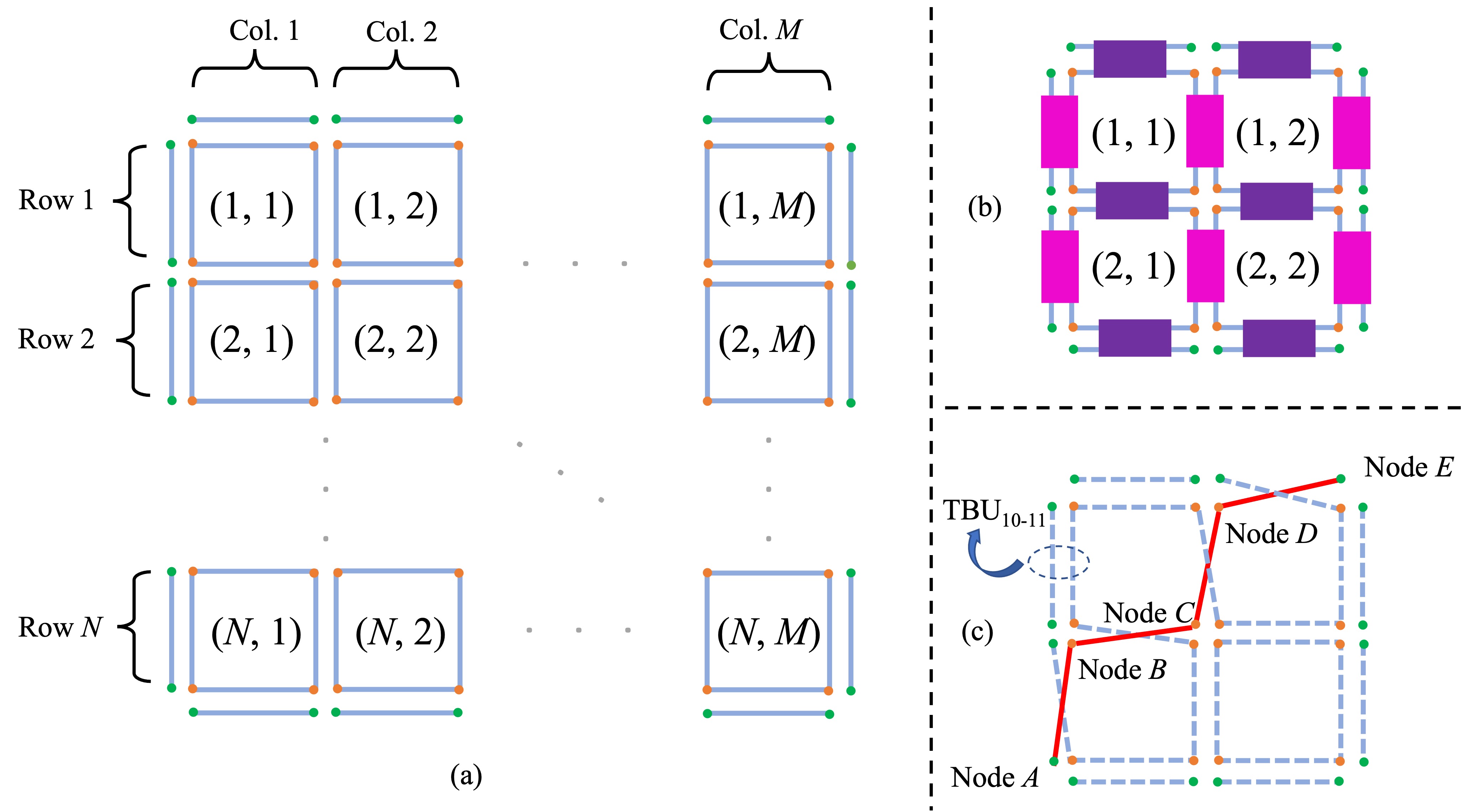}
    \caption{(a) An undirected simple graph representation of an $N\times M$ square-mesh PPIC when all TBUs are in bar state. The green and orange dots represent floating and non-floating nodes, respectively. (b)~An example of a $2\times 2$ square mesh when all TBUs are in bar state. Purple and pink rectangles represent horizontal and vertical TBUs, respectively. (c)~The optical path marked by red has length $4$. In this configuration, four TBUs are in cross state.}
    \label{fig:undirected_simple_graph_representation}
\end{figure}

\begin{definition}
an $N\times M$ square-mesh PPIC can be represented by an undirected simple graph, with parallel/cross line segments for bar/cross state, respectively, as demonstrated in Fig.~\ref{fig:undirected_simple_graph_representation} (a) and (c).
\end{definition}


We will respectively use $N$ and $M$ to represent the number of rows and columns throughout our paper. We will use the format $\text{TBU}_{ij-mn}$ to refer to the TBU at the intersection of square cell $(i,\,j)$ and $(m,\,n)$. To ease the mathematical description later, we introduce two definitions, based on the concepts of `node degree' and `path' in graph theory. 

\begin{definition}
We define a floating node as a node with only one edge connected to it (i.e., node degree equal to $1$).  Similarly, a non-floating node refers to a node with at least two edges connected to it (i.e., node degree no less than $2$).
\end{definition}

\begin{definition}
We define an undirected optical path as a path both starting from and ending at a floating node. Path length is defined as the number of edges (or equivalently, the number of TBUs) that the path passes through.
\end{definition}

Several things need to be clarified. First, in a square mesh, a non-floating node can only have node degree equal to or less than $2$, but not larger than $2$, because at most two edges are connected to a node. Second, due to reciprocity of light propagation in passive circuits, one undirected optical path actually corresponds to two directed light paths. For instance, in Fig.~\ref{fig:undirected_simple_graph_representation}~(c), we will use $(A,B,C,D,E)$ to denote the undirected optical path marked by red of length $4$, while that path actually corresponds to two directed optical paths $A\to B \to C \to D \to E$ and $E \to D \to C \to B \to A$ in potential applications. Equivalently, $(E,D,C,B,A)$ is another valid notation for this undirected light path. However, $(E,D,C,B,A)$ with start at $E$ and end at $A$, and $(A,B,C,D,E)$ with start at $A$ and end at $E$, will be counted as one undirected light path instead of two in our paper. Last, but not least, the metric path length is closely related to the phase change of an optical signal when it goes through a square mesh. Given an optical path of length $l$, assume that a time-harmonic optical signal with input complex magnitude $b$ following this path goes through the square mesh. Then, according to Eq.~(\ref{eq:cross_bar}), the output response is:
\begin{equation}\label{eq:path_length_response}
    b \cdot (-1)^q \cdot (\alpha\e^{-\cj \omega\frac{n_{\text{eff}}L}{c}})^l=b \cdot \alpha^{l}\e^{-\cj (\omega\frac{n_{\text{eff}}lL}{c}+q\pi)}
\end{equation}
Note that the $(-1)^q$ factor corresponds to the fact that a TBU in bar state might introduce an additional $180^{\circ}$ phase shift, as shown in Eq.~(\ref{eq:cross_bar}). Here $q$ can be regarded as either $0$ or $1$, depending on the number of TBUs in bar state in the path. Nevertheless, the $q\pi$ phase shift is trivial, as it can be compensated if we append additional phase shifts at the output of the square mesh. More importantly, what we focus on are the remaining term $l(\omega{n_{\text{eff}}L}/{c})$ which characterizes the phase response, and the term $\alpha^l$ which impacts the magnitude response. These expressions manifest the rationale for basing our analysis on the path length $l$. To facilitate discussion later in the paper, we introduce the concept of peripheral TBU:

\begin{definition}
A peripheral TBU refers to a TBU possessing floating nodes. A non-peripheral TBU is a TBU with all four of its nodes non-floating.
\end{definition}

As shown in Fig.~\ref{fig:undirected_simple_graph_representation}~(c), $\text{TBU}_{10-11}$ is a peripheral TBU placed vertically at the top left position. This example also indicates that to apply our TBU naming format to a peripheral TBU, we have to imagine a notional additional square cell (e.g., cell $(1,\,0)$ in this case) at the outer left side. In the following, we present several axioms in Theorem~\ref{thm:oobvious} to lay the groundwork for later analysis.

\begin{theorem}\label{thm:oobvious}
For an $N\times M$ square mesh, we have the following conclusions:
\begin{enumerate}[label = (\arabic*)]
    \item There are $N(M+1)+M(N+1)$ TBUs in the circuit.
    \item The total number of circuit configurations is $2^{N(M+1)+M(N+1)}$.
    \item The total numbers of floating and non-floating nodes are $(4N+4M)$ and $4NM$, respectively.
    \item For a fixed circuit configuration, the total  number of undirected optical paths is $(2N+2M)$. 
    \item An undirected optical path has two different floating nodes, at the path start and end, respectively.
    \item An undirected optical path passes through a vertical and a horizontal TBU in turns, abbreviated to '$\cdots$VHVH$\cdots$'. 
\end{enumerate}
\end{theorem}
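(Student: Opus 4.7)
The plan is to handle the six parts in order, tying them together through a single combinatorial/graph framework based on counting TBU ports and analyzing the induced optical graph. Part~(1) is a direct grid count: horizontal TBUs sit on horizontal edges, of which there are $N+1$ rows each containing $M$ edges, giving $M(N+1)$; by symmetry there are $N(M+1)$ vertical TBUs, and the two numbers add to the claim. Part~(2) follows immediately, since each TBU independently chooses between bar and cross.

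For Part~(3), I would count the total number of TBU ports as $4\bigl[M(N+1)+N(M+1)\bigr]$. Each port is either floating (dangling off the boundary) or identified with exactly one other port at an interior junction. The peripheral TBUs number $2N+2M$ (the top and bottom boundaries each contribute $M$ horizontal TBUs, and the left and right boundaries each contribute $N$ vertical TBUs), and each peripheral TBU exposes exactly $2$ floating ports, yielding $4N+4M$ floating nodes. The remaining ports pair up two-to-one to form interior nodes, giving $\bigl(4[M(N+1)+N(M+1)]-(4N+4M)\bigr)/2 = 4NM$ non-floating nodes.

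Parts~(4) and~(5) rest on the observation that in any fixed configuration every floating node has degree $1$ and every non-floating node has degree $2$: each TBU in bar or cross state turns each of its four ports into an endpoint of exactly one internal edge, and at each interior junction exactly two such ports are identified. A graph in which every vertex has degree at most $2$ decomposes uniquely into disjoint simple paths and cycles, and the undirected optical paths defined earlier are exactly the non-cycle components. Hence the number of optical paths equals (number of degree-$1$ nodes)$/2 = (4N+4M)/2 = 2N+2M$, which proves~(4); and each such path has two distinct degree-$1$ endpoints, which proves~(5).

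Part~(6) follows from a local mesh fact: at every interior junction, the two identified ports always come from one horizontal TBU and one vertical TBU, never from two TBUs of the same orientation, because horizontal TBU ports enter a junction from the east or west while vertical TBU ports enter from the north or south, and the mesh layout identifies one of each. Consequently, as an optical path crosses any non-floating node, it transitions between a horizontal edge and a vertical edge, forcing the $\cdots$VHVH$\cdots$ alternation. The main obstacle, modest though it is, lies in Part~(3): one must be precise that interior ports pair up two-to-one and not in any other multiplicity, and this rests on the same horizontal-meets-vertical observation that powers~(6); getting that identification right makes both counts fall out cleanly.
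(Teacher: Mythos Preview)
Your proposal is correct and is exactly the kind of ``straightforward calculation'' the paper alludes to; the paper itself offers no proof beyond that phrase and a one-line remark for~(2), so your write-up simply makes explicit the counts and the degree-$1$/degree-$2$ graph decomposition that the authors leave to the reader.
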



The above conclusions can be proved by straightforward calculation. For~(2), we note that each TBU has two state choices (i.e., either cross or bar), and thus, the total number of circuit configurations equals $2^{\#\text{TBU}}$, where $\#\text{TBU}$ is provided in~(1). For a curious reader, we pose what may appear to be a simple question: can a $2 \times 3$ square mesh implement a path of length $3$ or $7$? After drawing and trying many possibilities by hand, we find that a $2\times 3$ square mesh can implement a path of length $7$, but not $3$. Even more surprisingly, neither is realizable in a $2\times 2$ square mesh. To this end, our first question, which will be answered in the next section, is: \textbf{In an $N\times M$ square mesh, is path length $x$ ($x \in \mathbb{Z}^{+}$) realizable by some circuit configuration?}

\section{Routing Ability of Square Mesh}\label{sec:routing}

\subsection{Realizability of a Single Path}\label{sec:single_path}

First, we attempt to bound the value of $x$ by investigating the minimum and maximum realizable path length in an $N\times M$ square. Obviously, the minimum path length equals $1$. The following Theorem~\ref{thm:maximum_length} answers the maximum length question.

\begin{theorem}\label{thm:maximum_length}
Enumerating all possible circuit configurations, the maximum path length that an $N\times M$ square mesh can achieve equals $(4NM+1)$.
\end{theorem}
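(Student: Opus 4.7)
The plan is to establish the bound $4NM+1$ in two steps: an upper bound via an edge-counting argument, and a matching lower bound by explicit construction.

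For the upper bound, I would use the fact that every node in the graph has degree at most $2$: a floating node has degree exactly $1$ by Definition~2, and a non-floating node has degree exactly $2$ (one edge contributed by each of the two TBUs sharing it). Hence the graph is a disjoint union of simple paths and simple cycles. Since each TBU contributes two edges regardless of its state, the total edge count equals $2(2NM+N+M)=4NM+2N+2M$. By Theorem~\ref{thm:oobvious} item (4), there are exactly $2N+2M$ path components, each of length at least $1$, and any cycle also consumes edges. Thus, if $\ell$ is the length of any particular path, $\ell+(2N+2M-1)\cdot 1\leq 4NM+2N+2M$, yielding $\ell\leq 4NM+1$. Equality can hold only when there are no cycles and every one of the remaining $2N+2M-1$ paths has length exactly $1$.

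For the matching lower bound I would exhibit an explicit configuration. Set every non-peripheral TBU to the cross state, and set every peripheral TBU to the bar state except for one distinguished peripheral TBU (say, the top-left horizontal TBU) which is set to the cross state. Each peripheral TBU in the bar state contributes a length-$1$ path through its ``outer'' edge, whose two endpoints both lie on the mesh boundary and are hence floating; this produces $2N+2M-1$ length-$1$ paths. The remaining edges should then form a single long path that enters one floating port of the distinguished cross TBU, traverses both edges of every cross-state TBU (each ``fully used'') and the inner edge of every bar-state peripheral TBU (each ``half used''), and exits through the other floating port of the distinguished TBU. By the edge count, this long path has length $(4NM+2N+2M)-(2N+2M-1)=4NM+1$, matching the upper bound.

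The main technical obstacle is verifying that the proposed configuration really produces a single connected long path rather than splitting into several components or forming cycles. Concretely, once the $2N+2M-1$ outer bar edges are removed, one must check that the residual subgraph (i) has exactly two degree-$1$ nodes, namely the two floating ports of the distinguished cross TBU, with every other non-isolated node of degree $2$, and (ii) is connected. I would establish this either by explicitly tracing the snake traversal for general $N$ and $M$ (it naturally zig-zags across each row via the cross-state interior TBUs and turns at the boundary via the inner edges of the bar peripheral TBUs), or inductively by showing that extending the mesh by a row (or a column) preserves the single-long-path structure after a local adjustment of the boundary states. Once connectedness and acyclicity are confirmed, the edge count forces the unique long path to have length $4NM+1$, completing the proof.
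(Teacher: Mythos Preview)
Your upper-bound argument via edge counting is correct and is a clean alternative to the paper's node-counting contradiction (the paper argues that a path of length $\geq 4NM+2$ would contain $\geq 4NM+3$ nodes, hence at least three floating nodes). Your version has the nice by-product of identifying exactly when equality can hold.

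However, your proposed construction is wrong, and the verification you defer would not go through. With all peripheral TBUs in bar state and all non-peripheral TBUs in cross state, the interior edges do \emph{not} form a single cycle. Already in the $2\times 2$ mesh one gets two disjoint $8$-cycles: one through $(1,1,\mathrm{NW}),(1,1,\mathrm{NE}),(1,2,\mathrm{SW}),(2,2,\mathrm{NE}),(2,2,\mathrm{SE}),(2,2,\mathrm{SW}),(2,1,\mathrm{NE}),(1,1,\mathrm{SW})$ and another through the remaining eight non-floating nodes. Flipping a single peripheral TBU to cross then opens only \emph{one} of these cycles, producing a path of length $9$ while the other $8$-cycle remains closed; the longest path is $9$, not $4\cdot 2\cdot 2+1=17$. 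In general, ``all interior cross'' generates several interleaved cycles rather than one snake, so your connectedness check in step~(ii) would fail rather than succeed.

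The paper's construction is more selective: it puts only the non-peripheral \emph{vertical} TBUs in cross state, leaves the non-peripheral horizontal TBUs in bar state \emph{except} for exactly $N-1$ ``turn'' TBUs (one per row change in the zigzag), and flips the single peripheral TBU $\text{TBU}_{10-11}$. The bar-state interior horizontal TBUs keep the walk inside a single row until the designated turn TBU sends it to the next row; this is what guarantees a single Hamiltonian-like traversal of all $4NM$ interior nodes. Your intuition that the long path should ``zig-zag across each row \ldots\ and turn at the boundary'' is exactly right, but implementing that zig-zag requires leaving most interior horizontal TBUs in bar state, not cross.
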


\begin{proof}
The proof is made up of two parts: (i)~the maximum path length cannot be larger than $(4NM+1)$, and (ii)~a path of length $(4NM+1)$ is indeed realizable in an $N\times M$ square mesh. For (i), we prove by contradiction: Assume a path of length $x\geq 4MN+2$ is realizable. Then, the number of nodes on this path is $(x+1)$, which is no less than $(4MN+3)$. Since we only have $4NM$ non-floating nodes as depicted in Theorem~\ref{thm:oobvious}~(3), then this path must experience at least $3$ floating nodes, which contradicts Theorem~\ref{thm:oobvious}~(5). 

Now, we prove (ii)~by constructing a path of length $(4NM+1)$. In outline, we first show the construction in the case of a $2\times 2$ square mesh in Fig.~\ref{fig:proof}~(a). The key of the construction is to make the optical path traverse all cells following $(1,1)\to(1,2)\to(2,2)\to(2,1)$, and then reverse this trajectory going back to $(1,1)$. We can generalize this construction to any $N\times M$ square mesh. Roughly speaking, we make the optical path traverse the first row from left to right (e.g., $(1,1)\to(1,2)\to\cdots\to(1,M)$). Then by setting $\text{TBU}_{1M-2M}$ to cross state, we direct the optical path down to the next row, cell $(2,M)$. Next, the optical path will follow $(2,M)\to(2,M-1)\to\cdots\to(2,1)$, and so on and so forth. With a proper setting of all TBUs, the optical path will follow a zigzag path traversing to the last cell in the $N$-th row, and then reversely go back to $(1,1)$.

Formally, we provide the construction in a general $N\times M$ square mesh. We set one peripheral TBU, $\text{TBU}_{10-11}$, all non-peripheral vertical TBUs, and $(N-1)$ non-peripheral horizontal TBUs,
\begin{equation*}
\begin{aligned}
    \{\text{TBU}_{1M-2M},\,\text{TBU}_{21-31},\,\text{TBU}_{3M-4M},\,\cdots
    \\ 
    ,\,\underbrace{\text{TBU}_{(N-1)M-NM} \text{ or } \text{TBU}_{(N-1)1-N1}\}}_{\text{depends on if $N$ is odd or even}}
\end{aligned}
\end{equation*}
to cross state, while all other TBUs are set to bar state. Then an undirected optical path with start and end at the floating nodes of $\text{TBU}_{10-11}$ has length of $(4NM+1)$.
\end{proof}

Theorem~\ref{thm:maximum_length} implies that if $x$ is not an integer in the range $[1,\, 4NM+1]$, then it will not be realizable in an $N\times M$ square mesh. Then is any integer path length in the range $[1,\, 4NM+1]$ realizable? Unfortunately, the answer is no, as the curious reader might have found for themselves in trying to synthesize a path of length $3$ or $7$ in a $2\times 2$ square mesh. At the end of this section, we will determine the realizability condition, but before that, we need some further understandings on path length. 

Consider an undirected optical path with start point on the left side of a $2\times 2$ mesh shown in Fig.~\ref{fig:proof}~(d). Based on where the end point is, there are three types: (i)~{Type S}: the gray path, where the end point is located on the {\em same} side as the start point; (ii)~{Type A}: the red path, where the end point is located on the {\em adjacent} side (i.e., top or bottom in this case) of the start point; and (iii)~{Type O}: the pink path, where the end point is located on the {\em opposite} side (i.e., the right side in this case) of the start point. We observe that the path length possesses very different characteristics for these three cases, as summarized in the following Theorem~\ref{thm:length_depends_case}.

\begin{figure}[!htb]
    \centering
    \includegraphics[width=0.8\linewidth]{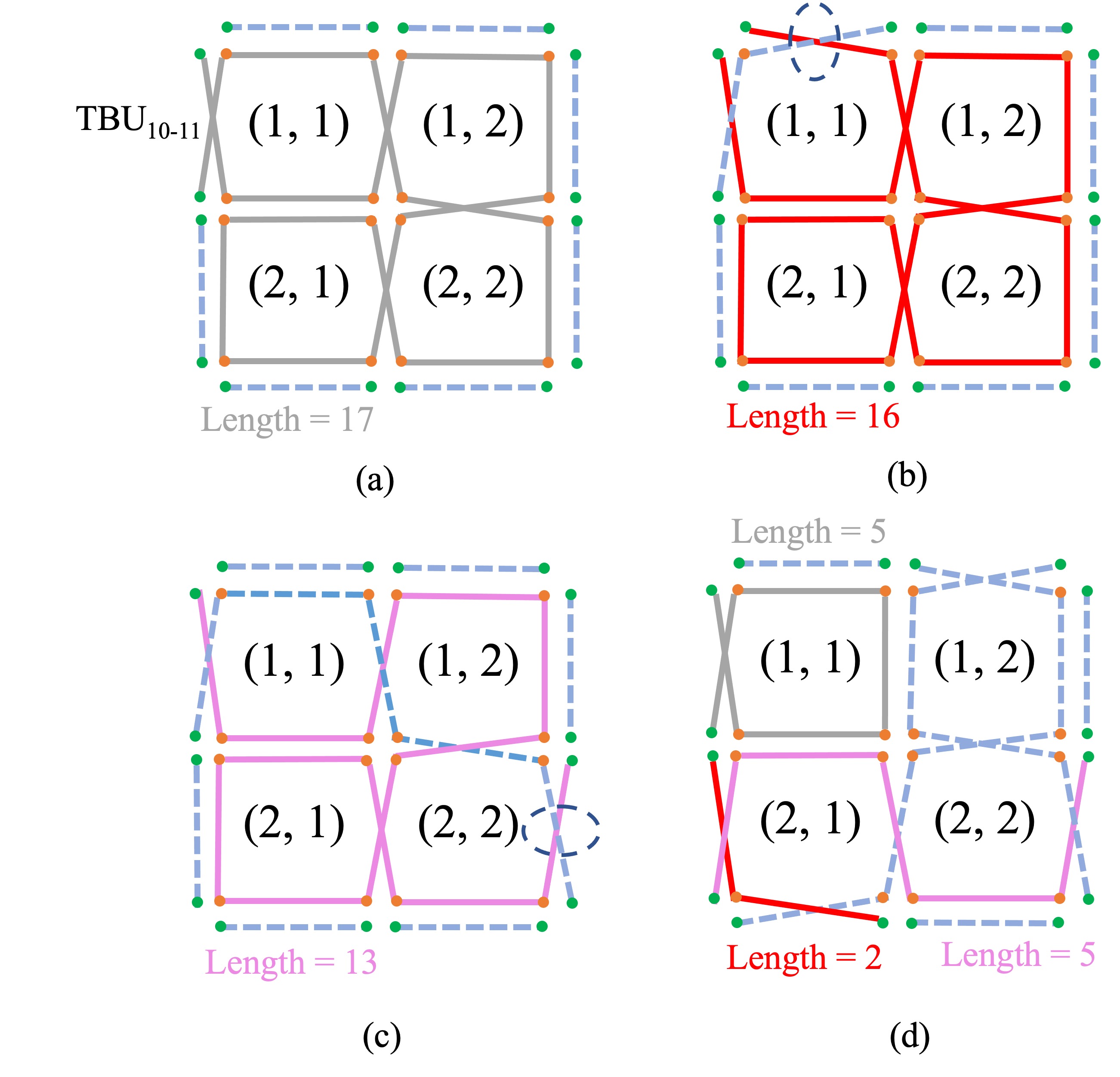}
    \caption{(a)~Construction for maximum length $(4MN+1)$ in an $N\times M$ square mesh demonstrated in the case of $M=N=2$. (b)~Based on (a), we change the top left peripheral horizontal $\text{TBU}_{01-11}$ to cross state, yielding a path of length $4MN$. (c)~Based on (a), we change the second top right peripheral vertical $\text{TBU}_{2M-2(M+1)}$ to cross state, yielding a path of length $(4MN+1-2M)$. (d)~We fix the start point of an undirected optical path to the left side; based on the end point location, there are three different path types.}
    \label{fig:proof}
\end{figure}

\begin{theorem}
\label{thm:length_depends_case}
Consider an undirected optical path in an $N\times M$ square mesh. Without loss of generality, we assume its start point is located on the left side of the square mesh, and denote its path length by $l$.
\begin{enumerate}[label = (\arabic*)]
    \item Type S: If the path's start and end nodes are located on the \underline{same} side, then $l \equiv 1 \pmod{4}$.
    \item Type A: If the path's start and end nodes are located on \underline{adjacent} sides, then $l \equiv 0 \text{ or } 2 \pmod{4}$.
    \item Type O: If the path's start and end nodes are on \underline{opposite} sides, then $l \equiv 3 \pmod{4}$ if $M$ is odd, and $l \equiv 1 \pmod{4}$ if $M$ is even.
\end{enumerate}
Here $l \equiv d \pmod{4}$ (where $d=0,1,2,3$) means $l$ has a remainder of $d$ when divided by $4$.
\end{theorem}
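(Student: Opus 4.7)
The plan is to prove each congruence by case analysis on the three path types (S, A, O), viewing the path as a walk on the cell lattice of the mesh and exploiting the strict V--H alternation from Theorem~\ref{thm:oobvious}(6).

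First I would attach to the path a coordinate walk $(r_k,c_k)$ indexed by TBU crossings, initialized at the notional cell $(i_0,\,0)$ just outside the left edge of the mesh. Since the start lies on the left, the first TBU is a left-boundary vertical TBU, and by alternation the sequence of TBU types is pinned down; in particular the type of the final TBU is determined by the side containing the endpoint (V for left/right, H for top/bottom). Each V-crossing changes one of $\{r,c\}$ by $\pm 1$ while fixing the other, and each H-crossing does the opposite. Writing $v$ and $h$ for the numbers of V- and H-TBUs traversed, I have $v+h=l$ and $|v-h|\leq 1$, and the parity of $l$ is forced: odd in Types~S and~O (both endpoints on a V-boundary) and even in Type~A.

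Next, for each type I would translate the endpoint location into a signed-sum constraint on the $\pm 1$ steps: Type~S forces one coordinate to return to $0$ (so the corresponding step count is even); Type~O forces a net coordinate change of $M+1$ (so the step count has the same parity as $M+1$); Type~A forces one coordinate to reach $0$ or $N+1$ (a parity condition on the orthogonal step count). Each such parity condition, combined with $v+h=l$ and the knowledge of which of $v,h$ equals $\lceil l/2\rceil$ versus $\lfloor l/2\rfloor$ (itself determined by whether the first and last TBUs are both V, both H, or one of each), pins down $l\pmod{4}$. The bifurcation of Type~O on $M\pmod{2}$ arises directly from the parity of $M+1$.

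The main technical challenge is the mod-4 bookkeeping: a single parity condition on $v$ (or $h$) yields only $l\pmod{2}$, and to sharpen it to mod~$4$ one must carefully combine it with the forced split of $l$ into $v$ and $h$ coming from the start/end sides. A secondary step, needed to establish tightness, is an explicit construction argument--in the spirit of part~(ii) of the proof of Theorem~\ref{thm:maximum_length}--exhibiting paths that realize each non-excluded residue (for instance, witnessing both $l\equiv 0$ and $l\equiv 2\pmod{4}$ in Type~A, so that neither can be pruned further).
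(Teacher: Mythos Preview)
Your plan is essentially the paper's proof: use the forced V--H alternation to determine whether the path pattern is $\text{VH}\cdots\text{V}$ (Types~S and~O, so $l=2h+1$) or $\text{VH}\cdots\text{H}$ (Type~A, so $l=2h$), and then read off the parity of the horizontal-step count $h$ from the net left--right displacement dictated by the endpoint side---the paper does this by directly counting ``forward'' versus ``backward'' H-moves rather than setting up a full $(r,c)$ coordinate walk, but the content is the same. Two cautions: in the paper's accounting the net horizontal displacement for Type~O is $M$, not $M{+}1$, and an off-by-one here flips the mod-$4$ conclusion, so make sure your coordinate convention is nailed down; also, the Type-A tightness constructions you sketch are not required for this theorem, which is stated as a one-way implication---realizability of each admissible residue is handled later in Corollary~\ref{cor:length_different_type} and Theorem~\ref{thm:realize_single_path}.
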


\begin{proof}
Theorem~\ref{thm:oobvious}~(6) states that any undirected optical path can be expressed using the notation `$\cdots$ VHVH $\cdots$', where `V' and `H' stand for vertical and horizontal TBUs, respectively, starting from the start node. We assume the start point is located on the left side; thus, the first TBU seen by this path must be a vertical one (see Fig.~\ref{fig:undirected_simple_graph_representation}~(c)). 

Now if a path belongs to type A, then the last TBU seen by this path must be horizontal (see Fig.~\ref{fig:undirected_simple_graph_representation}~(c)). Thus, in this case, the path follows `VHVH$\cdots$H', which implies that this path must see $2d$ TBUs (i.e., $d$ vertical and $d$ horizontal) in total. Namely, $l$ is even. For later consistency, we write the path length in this case as $l \equiv 0 \text{ or } 2 \pmod{4}$.

If a path belongs to type S, then the last TBU seen by this path must be vertical, which indicates that this path follows `VHVH$\cdots$V'. We denote the total number of `H's on this path by $d$. Then, the path length is $l=2d+1$ since the number of `V's is one larger than that of `H's. Now, if we can prove $d$ is even, then we will attain the desired conclusion $l \equiv 1 \pmod{4}$ for type S. Let us complete the proof by assuming there are $d_0$ `H's corresponding to the path going in the forward direction (from left to right). Then, since both the start and end points are on the left side, there must be $d_0$ `H's corresponding to the backward direction (from right to left) as well. Otherwise, the end point cannot be on the left side. Thus, the total number of `H's $d=2d_0$, implying $d$ is indeed even.

Now we deal with type O. The proof for this case is similar to that for type S. If a path belongs to type S, then the last TBU seen by this path must be vertical, and this path follows `VHVH$\cdots$V'. If there are $d_0$ `H's corresponding to going in the backward direction (from right to left), then there must be $(d_0+M)$ `H's corresponding to going in the forward direction, since the end point is located at the right side. Thus, there are $(2d_0+M)$ `H's in total. The number of `V's is $(2d_0+M+1)$, and the path length $l=\#H+\#V=4d_0+2M+1$. This implies that $l \equiv 3 \pmod{4}$ if $M$ is odd, and $l \equiv 1 \pmod{4}$ if $M$ is even.
\end{proof}

\begin{corollary}\label{cor:length_different_type}
Following the same assumptions of Theorem~\ref{thm:length_depends_case}, the maximum and minimum path length for different types is listed below, and the bound is achievable by some circuit configuration.
\begin{enumerate}[label = (\arabic*)]
    \item Type S: $1 \leq l\leq 4MN+1$
    \item Type A: $2\leq l \leq 4MN$
    \item Type O: $2M+1 \leq l \leq 4MN - 2M + 1$
\end{enumerate}
\end{corollary}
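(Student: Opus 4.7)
I would split the argument by the three types and, for each, establish the lower bound, the upper bound, and achievability of both extremes; the main obstacle is the upper bound for Type O.

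The lower bounds are immediate from Theorem~\ref{thm:length_depends_case}. For Type S, $l\equiv 1\pmod{4}$ forces $l\geq 1$, achieved by setting $\text{TBU}_{10-11}$ to bar state so that its floating-to-floating edge along the same arm forms a single-edge path. For Type A, $l$ must be even, so $l\geq 2$, achieved by setting both $\text{TBU}_{10-11}$ and $\text{TBU}_{01-11}$ to cross, producing a length-2 path from a left floating node through one corner of cell $(1,1)$ to a top floating node. For Type O, the identity $l=4d_0+2M+1$ with $d_0$ a non-negative integer forces $l\geq 2M+1$, achieved by a minimum-length horizontal traversal zigzagging through one or two rows.

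The upper bounds for Type S and Type A follow directly from Theorem~\ref{thm:maximum_length}: the global bound $l\leq 4MN+1$ handles Type S, and combining it with the parity restriction of Type A (which excludes the odd value $4MN+1$) gives $l\leq 4MN$. Achievability at $4MN+1$ for Type S is already provided by the construction of Theorem~\ref{thm:maximum_length}, whose two endpoints are both floating nodes of $\text{TBU}_{10-11}$ on the left. For Type A's maximum $4MN$, I would use the Fig.~\ref{fig:proof}(b) modification: flipping the peripheral $\text{TBU}_{01-11}$ from bar to cross diverts one endpoint of the Theorem~\ref{thm:maximum_length} path to a top floating node, reducing length by exactly one.

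The Type O upper bound $l\leq 4MN-2M+1$ is the main challenge, since Theorem~\ref{thm:maximum_length} alone gives only $l\leq 4MN+1$, a gap of $2M$. My approach is structural: any Type O path's start TBU (a left-peripheral V TBU) and end TBU (a right-peripheral V TBU) must both be in cross state, since otherwise the first or last edge would stay between two floating nodes on the same side and violate Theorem~\ref{thm:oobvious}(5). In cross state, both edges of a peripheral V TBU connect a floating to a non-floating port, so exactly one edge per end TBU can appear on the path (the one incident to the actual floating endpoint), and the non-floating endpoint of the other cross edge is forcibly off the path, since putting it on would require a second floating node on the same side. Tracking the ripple of forced exclusions through the alternating V/H structure of the graph yields a chain of excluded non-floating nodes threading from the start TBU to the end TBU; a careful accounting shows this chain contains exactly $2M$ non-floating nodes, so at most $4MN-2M$ non-floating vertices lie on any Type O path, giving $l\leq 4MN-2M+1$. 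Achievability is the Fig.~\ref{fig:proof}(c) modification: flipping the right-peripheral $\text{TBU}_{2M-2(M+1)}$ from bar to cross diverts one endpoint of the Theorem~\ref{thm:maximum_length} path to a right floating node, producing a Type O path of length $4MN+1-2M$. The delicate step is verifying that the excluded chain always has length exactly $2M$ for any Type O configuration, which I would handle by induction along the chain, analyzing how the exclusion propagates at each alternating V or H step.
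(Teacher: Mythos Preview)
Your treatment of Types S and A, and of the Type~O lower bound and achievability constructions, matches the paper's approach and is fine.

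The gap is in your Type~O upper bound. Your first two structural observations are correct: the start and end peripheral V~TBUs must be in cross state, and in each of these TBUs exactly one non-floating port (the sibling of the path's first/last interior node) is forcibly off the path. But the ``ripple of forced exclusions'' does not propagate the way you suggest. Excluding the sibling node $n_2$ only removes the two edges incident to $n_2$; its neighbour $Y$ through the adjacent H~TBU still has degree~2 in the configuration graph and can lie on $P$ via its \emph{other} edge. So the forced exclusions you can actually justify are just the two sibling nodes at the start and end TBUs, which yields only $l\le 4MN-1$, not $4MN-2M+1$. The ``chain threading from start TBU to end TBU'' you envision is really the \emph{other optical path} $P'$ through $n_2$ in the same configuration; its length is not pinned at $2M$ but depends on every TBU state along it, and it need not reach the right side at all. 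The inductive step you promise has no mechanism to force it forward.

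The paper takes a different route: it counts H~edges rather than excluded nodes. The key observation is that every peripheral horizontal TBU can contribute at most one H~edge to a Type~O path (its other edge, or both edges in cross state, touch a top/bottom floating node), so the path sees at most $2M\cdot 1 + M(N-1)\cdot 2 = 2NM$ H's; combining this with the directional identity $\#\text{forward H} - \#\text{backward H} = M$ from Theorem~\ref{thm:length_depends_case} then tightens the count of usable H's to $M(2N-1)$, giving $l\le 4MN-2M+1$. If you want to salvage a node-counting argument, you would need to show that the complementary paths through the two sibling nodes jointly carry at least $2M$ non-floating vertices in every configuration, which is not automatic and essentially requires reproving the H-edge bound in disguise.
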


\begin{proof}
For (1), if we set a peripheral vertical TBU at the left side to bar state, then its two floating nodes build up an undirected optical path of length $1$. For the upper bound, we have already provided a construction for the maximum path length $(4MN+1)$ in Theorem~\ref{thm:maximum_length} and the constructed path belongs to type S.

For (2), the minimum path length $2$ is demonstrated by the red path in Fig.~\ref{fig:proof}~(d). For the upper bound, first because $l$ cannot be $(4MN+2)$ or larger due to Theorem~\ref{thm:maximum_length} and $l$ must have remainder of $0$ or $2$ due to Theorem~\ref{thm:length_depends_case}, $l$ has to be smaller than $4MN$. Moreover, we can provide a circuit configuration to achieve the length $4MN$ based on the construction shown in Theorem~\ref{thm:maximum_length}: we further change $\text{TBU}_{{01}-{11}}$ to cross state, yielding a path of length $4MN$ belonging to type A. See Fig.~\ref{fig:proof}~(b) for an illustration in a $2\times 2$ square mesh.

For (3), that the minimum path length equals $(2M+1)$ is straightforward, as demonstrated by the pink path in Fig.~\ref{fig:proof}~(d). Namely, the minimum path is attained when the pink path attempts to directly go from left to right. However, due to the special topology, it will pass through $M$ horizontal TBUs, and $(M+1)$ vertical TBUs in turns, leading to $l=2M+1$. For the upper bound, we first notice that a path of type O consumes at most one edge of each peripheral horizontal TBU (see Fig.~\ref{fig:proof}(c)). Since there are $2M$ peripheral horizontal TBUs, and $M(N-1)$ non-peripheral horizontal TBUs, a path of type O represented by `VHVH$\cdots$HV' at most consumes $2NM$ `H's because of: 
$$
\underbrace{2M\times 1}_{\text{`H' by peripheral horizontal TBUs}} + \underbrace{M(N-1) \times 2}_{\text{`H' by non-peripheral horizontal TBUs}}
$$
However, $2NM$ `H's will make the path's start and end point both at the left side, and the path will be of type S not type O. To enforce the path being type O (i.e., end at the right side), we have to subtract $2NM$ by $M$ at least. In summary, the maximum number of `H's a path of type O can consume is $M(2N-1)$, where $NM$ 'H's correspond to going in the forward direction (i.e., from left to right), and the remaining for the backward direction. As we have explained in Theorem~\ref{thm:maximum_length}, the number of `V's is one larger than the number `H's. Thus, we have $l\leq 2M(2N-1)+1=4MN-2M+1$. Now, we demonstrate $(4MN-2M+1)$ is achievable. Similarly, based on the construction shown in Theorem~\ref{thm:maximum_length}, we make one modification: we further change $\text{TBU}_{{2M}-{2(M+1)}}$ to cross state, yielding a path of length $(4MN+1-2M)$ belonging to type O. See Fig.~\ref{fig:proof} (c) for an illustration in a $2\times 2$ square mesh.
\end{proof}

Theorem~\ref{thm:maximum_length} and Corollary~\ref{cor:length_different_type} provide us the information on path length based on the path type. When presenting Theorem~\ref{thm:maximum_length} and Corollary~\ref{cor:length_different_type}, we assume that the start point is located on the left side, while it should be straightforward to generalize them to other cases, such as the start point on the right/bottom/top side. We do emphasize that for path type O, the generalization should be done carefully, as suggested by the following Corollary~\ref{cor:length_typeO_start_at_top}. 

\begin{corollary}\label{cor:length_typeO_start_at_top}
Consider an undirected optical path in an $N\times M$ square mesh. We assume its start point is located on the \underline{top side} of the square mesh, and denote its path length by $l$. If it belongs to type O (i.e., start and end point at opposite sides), then $l \equiv 3 \pmod{4}$ if $N$ is odd, and $l \equiv 1 \pmod{4}$ if $N$ is even. Moreover, $2N+1 \leq l \leq 4MN - 2N + 1$.
\end{corollary}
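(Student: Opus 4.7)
The plan is to exploit the geometric symmetry between the ``left--right'' and ``top--bottom'' orientations of the square mesh, so as to reduce Corollary~\ref{cor:length_typeO_start_at_top} to the already-proved Theorem~\ref{thm:length_depends_case}~(3) and Corollary~\ref{cor:length_different_type}~(3). Concretely, I would observe that an $N\times M$ square mesh rotated by $90^{\circ}$ is isomorphic (as an undirected simple graph) to an $M\times N$ square mesh, with horizontal TBUs becoming vertical and vice versa, with the original top side mapped to, say, the left side, and the original bottom side mapped to the right side. This isomorphism preserves bar/cross state labels, floating/non-floating node designations, path lengths, and path types.

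After this reduction, Theorem~\ref{thm:length_depends_case}~(3) applied to the rotated mesh (which has $N$ columns) immediately yields $l\equiv 3\pmod{4}$ if $N$ is odd and $l\equiv 1\pmod{4}$ if $N$ is even. Likewise, Corollary~\ref{cor:length_different_type}~(3) applied to the rotated mesh yields $2N+1\leq l\leq 4MN-2N+1$, and the extremal constructions given there transfer back to the original mesh under the rotation.

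If a reviewer prefers a self-contained derivation, I would instead mimic the combinatorial argument of Theorem~\ref{thm:length_depends_case}~(3) directly. Because the start point lies on the top side, the first TBU encountered is horizontal, so by Theorem~\ref{thm:oobvious}~(6) the path has the form `HVHV$\cdots$H'. Letting $d_0$ denote the number of vertical TBUs traversed in the ``upward'' (backward) direction, the path must traverse $d_0+N$ vertical TBUs in the ``downward'' (forward) direction in order to descend $N$ rows to the opposite side. Summing gives $\#V=2d_0+N$ and $\#H=2d_0+N+1$, hence $l=4d_0+2N+1$, from which both congruences drop out. The lower bound $l\geq 2N+1$ comes from $d_0=0$, and the upper bound $l\leq 4MN-2N+1$ follows by an edge-counting argument on peripheral versus non-peripheral vertical TBUs identical in spirit to the one in the proof of Corollary~\ref{cor:length_different_type}~(3), with $2N$ peripheral vertical TBUs (contributing at most one `V' each) and $N(M-1)$ non-peripheral vertical TBUs (contributing at most two `V's each), minus $N$ to enforce type O rather than type S.

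The one step that requires genuine care, and is the only real obstacle, is verifying that the $90^{\circ}$ rotation is a true graph automorphism respecting every structure invoked by the earlier theorems: the mapping of floating to floating, of admissible cross/bar states, and of path type (S$\leftrightarrow$S, A$\leftrightarrow$A, O$\leftrightarrow$O). Once this is checked, the rest is a substitution of $M\leftrightarrow N$, and I would favor the symmetry-based route as cleaner and less error-prone than the re-derivation route.
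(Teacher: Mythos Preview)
Your proposal is correct and matches the paper's own treatment. The paper does not supply a separate proof of Corollary~\ref{cor:length_typeO_start_at_top}; it simply remarks that Theorem~\ref{thm:length_depends_case} and Corollary~\ref{cor:length_different_type} are ``straightforward to generalize'' when the start point lies on another side, with the caveat that for type~O the relevant parameter switches from $M$ to $N$ when the start point is on the top or bottom --- exactly the $M\leftrightarrow N$ symmetry you invoke via the $90^{\circ}$ rotation, and your alternative direct HVHV$\cdots$H counting is the natural explicit verification of that remark.
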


In essence, for type O, when the start point is on the left or right side, the condition should be depicted using the number of columns $M$; and when the start point is on the top or bottom side, the condition should be depicted using the number of rows $N$. Now, we are ready to present our first main theorem about the realizability of a single path.

\begin{theorem}\label{thm:realize_single_path}
\textbf{\textup{(Main Result I)}} For an $N\times M$ square mesh and a desired path length $x$, we denote three integer sets:
$$
\begin{aligned}
\Gamma_\star &= \{d\,|\,d \equiv 0,1,2 \pmod{4}, \; 1 \leq  d \leq 4MN+1\}\\
\Gamma_M &= \{d\,|\,d \equiv 3 \pmod{4}, \; 2M+1 \leq   d  \leq 4MN+1-2M\}\\
\Gamma_N &= \{d\,|\,d \equiv 3 \pmod{4}, \; 2N+1 \leq   d  \leq 4MN+1-2N\}\\
\end{aligned}
$$
\begin{enumerate}[label = (\arabic*)]
    \item If both $N$ and $M$ are even, then any $x\in\Gamma_\star$ is realizable; $x\not\in\Gamma_\star$ is not.
    \item If $N$ is even and $M$ is odd, then any $x\in\Gamma_\star\cup\Gamma_M$ is realizable; $x\not\in\Gamma_\star\cup\Gamma_M$ is not.
    \item If $N$ is odd and $M$ is even, then any $x\in\Gamma_\star\cup\Gamma_N$ is realizable; $x\not\in\Gamma_\star\cup\Gamma_N$ is not.
    \item If both $N$ and $M$ are odd, then any $x\in\Gamma_\star\cup\Gamma_N\cup\Gamma_M$ is realizable; $x\not\in\Gamma_\star\cup\Gamma_N\cup\Gamma_M$ is not.
\end{enumerate}
\end{theorem}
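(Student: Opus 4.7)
The plan is to decompose the argument into necessity (every realizable length lies in the claimed set) and sufficiency (every length in the claimed set admits a realizing configuration). Necessity will follow quickly from the structural results already in hand; sufficiency is where the substantive work lies, because I must exhibit an explicit configuration for every valid $x$, not merely at the extremes.

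For necessity, I would first extend Theorem~\ref{thm:length_depends_case} and Corollary~\ref{cor:length_different_type} by symmetry to all four possible start sides (Corollary~\ref{cor:length_typeO_start_at_top} already handles one such case). Every undirected optical path has two floating endpoints on the boundary by Theorem~\ref{thm:oobvious}(5), so it falls into Type S, Type A, or Type O (the latter further split according to whether its endpoints are on the left/right or the top/bottom pair of sides). Types S and A pin the residue to $l \equiv 1$ and $l \equiv 0, 2 \pmod{4}$ respectively, independently of parity, and Theorem~\ref{thm:maximum_length} bounds $l$ by $4MN+1$; these contributions therefore sit in $\Gamma_\star$. The residue $l \equiv 3 \pmod{4}$ can only arise from Type O, and Theorem~\ref{thm:length_depends_case}(3) together with Corollary~\ref{cor:length_typeO_start_at_top} require $M$ odd (contributing range $\Gamma_M$) or $N$ odd (contributing range $\Gamma_N$). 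A case split on the parities of $(N, M)$ then yields exactly the four claimed containment statements.

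For sufficiency, the plan is to take the extremal constructions of Corollary~\ref{cor:length_different_type} as base cases and shorten them in controlled steps of $4$ via a detour-removal operation. The key lemma is that if a configuration's path contains a U-turn detour into an interior cell (produced by a pair of cross-state TBUs that steer the snake into and out of that cell), flipping those two TBUs back to bar state yields a valid configuration of the \emph{same} type whose path has length exactly $l-4$. Iterating detour-removal on the length-$(4MN+1)$ Type S construction of Theorem~\ref{thm:maximum_length} descends through $4MN-3, 4MN-7, \ldots, 5, 1$ and thus realizes every $l \equiv 1 \pmod{4}$ in $\Gamma_\star$; the same operation applied to the length-$4MN$ Type A construction of Fig.~\ref{fig:proof}(b) realizes every $l \equiv 0 \pmod{4}$. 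For the residues $l \equiv 2 \pmod{4}$ I would start instead from a length-$2$ corner-turn Type A path and iteratively \emph{insert} length-$4$ detours, covering $\{2, 6, 10, \ldots, 4MN-2\}$. Finally, whenever $\Gamma_M$ (resp.\ $\Gamma_N$) participates, I would iterate detour-removal on the length-$(4MN+1-2M)$ Type O construction of Fig.~\ref{fig:proof}(c) (resp.\ its $90^{\circ}$ rotation) down to the lower bound $2M+1$ (resp.\ $2N+1$).

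The main obstacle will be making the detour-removal/insertion lemma fully rigorous and checking that it can be applied at every intermediate length without fracturing the path into disconnected pieces or closing it into a cycle. For the Type S chain the unwinding is natural, since the snake of Theorem~\ref{thm:maximum_length} can be peeled back one cell at a time from its tip; for the residue-$2$ Type A ascent and for the Type O chains one must identify at each step a suitable peripheral or interior TBU whose flip produces the desired $\pm 4$ change while preserving the type. A short geometric check using the $\cdots\text{VHVH}\cdots$ alternation of Theorem~\ref{thm:oobvious}(6) and the floating-endpoint requirement of Theorem~\ref{thm:oobvious}(5) confirms that each flip yields a legitimate configuration, closing the argument.
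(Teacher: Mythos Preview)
Your proposal is correct and follows essentially the same approach as the paper: the necessity argument is identical (case split on path type via Theorem~\ref{thm:length_depends_case} and Corollaries~\ref{cor:length_different_type}--\ref{cor:length_typeO_start_at_top}), and your detour-removal/insertion lemma is exactly the paper's cell-by-cell zigzag construction viewed as a local $\pm 4$ operation rather than as an indexed family of pictures. The paper presents the sufficiency constructions by exhibiting, for each residue class, a sequence of configurations that consume $1,2,\ldots,NM$ cells in the zigzag order $(1,1),(1,2),\ldots,(1,M),(2,M),\ldots$, which is precisely the chain you obtain by iterating your flip operation from the extremal snake.
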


\begin{proof}
For (1), using Theorem~\ref{thm:length_depends_case} with Corollary~\ref{cor:length_different_type} and~\ref{cor:length_typeO_start_at_top}, we readily obtain for any $x\not\in\Gamma_\star$ that it is not realizable. Similar conclusions hold true for cases~(2)-(4). The remaining task is to provide constructions showing that for any $x$ in our defined set, it is indeed realizable.

In the following, we will demonstrate the construction method using a small square mesh; extending to a general $N\times M$ mesh is straightforward. To begin, we demonstrate the construction method for case~(1) using a $2\times 2$ square mesh in Fig.~\ref{fig:construction}. The key is to use all cells in a zigzag order: $\{(1,1), (1,2),\cdots, (1,M), (2,M), (2,M-1),\cdots,(2,1),(3,1),(3,2),\cdots\}$. Take $l\equiv 1 \pmod{4}$ as an example. The set $\{5,9,\cdots,4MN+1\}$ contains $NM$ integers, where $9$ occurs in the second place, and thus we will use cell $(1,1)$ and $(1,2)$ as shown in the second sub-figure in the top row of Fig.~\ref{fig:construction}. As another example, the set $\{2,6,10,\cdots, 4MN-2\}$ contains $NM$ integers, where $6$ occurs as the second, and thus we will use cell $(1,1)$ and $(1,2)$ as shown in the second sub-figure in the middle row of Fig.~\ref{fig:construction}.

\begin{figure*}[!tb]
    \centering
    \includegraphics[width=0.9\textwidth]{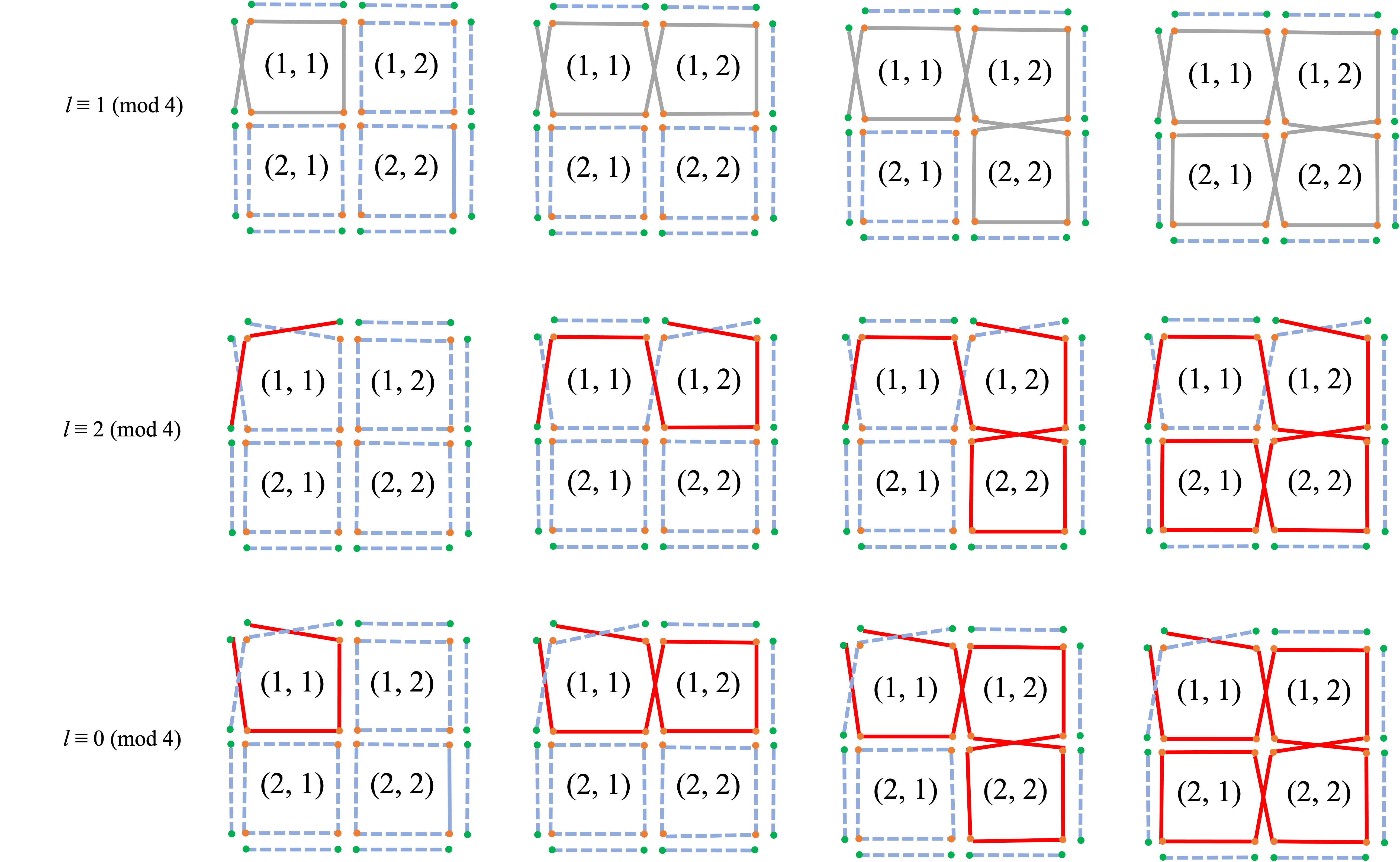}
    \caption{Illustration of construction for $\Gamma_\star$ using a $2\times 2$ square mesh (i.e., $N=M=2$). Note that $l=1$ is trivial, and is not shown in the first row.}
    \label{fig:construction}
\end{figure*}

For case~(2), we can apply the construction method we show for case~(1) to deal with $x\in\Gamma_\star$, and we only need to provide a construction method for those $x\in\Gamma_M$. We demonstrate our construction in Fig.~\ref{fig:construction2}. The set $\Gamma_M=\{2M+1, 2M+5,\cdots, 4NM-2M+1\}$ contains $(NM-M+1)$ integers. To realize the first desired path length $(2M+1)$, we will use all cells in the first row (i.e., cell $\{(1,1),(1,2),\cdots,(1,M)\}$), as shown in the first sub-figure in Fig.~\ref{fig:construction2}. Then for the remaining other desired path length, we will gradually exploit one additional cell in the zigzag order: $\{(2,M),(2,M-1),\cdots,(2,1),(3,1),\cdots\}$. Finally, the construction for cases~(3) and (4) should already be understood as they are similar to case~(2).

\begin{figure}[!htb]
    \centering
    \includegraphics[width=0.9\linewidth]{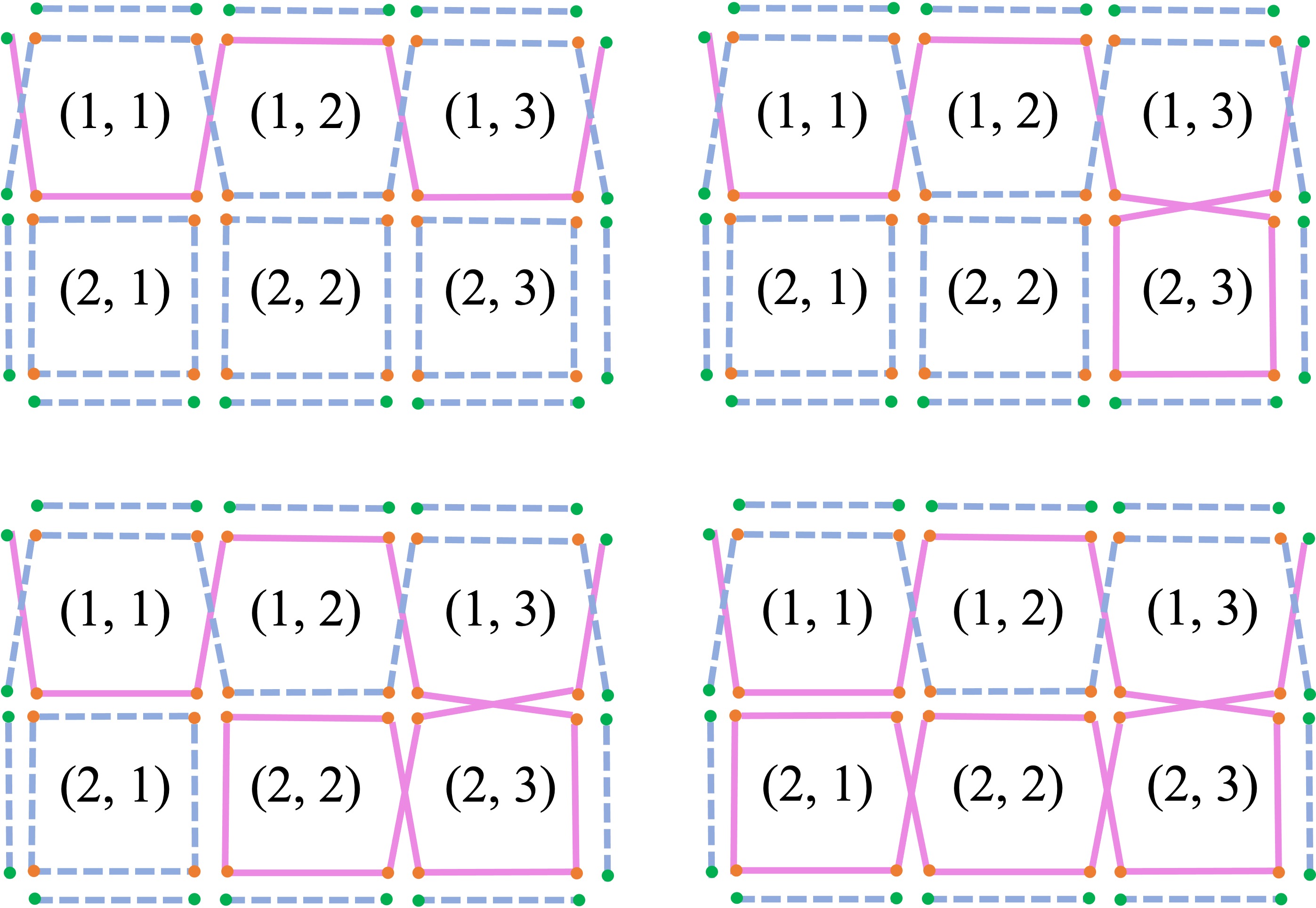}
    \caption{Illustration of construction for $\Gamma_M$ using a $2\times 3$ square mesh (i.e., $N=2$ and $M=3$).}
    \label{fig:construction2}
\end{figure}

\end{proof}

Thus far,  we have thoroughly answered the question of what path length $x$ is realizable in an $N\times M$ square mesh. Our findings provide valuable guidance when routing optical signals, such as where to put input and output nodes. We will consider such applications in Section~\ref{sec:applications}.


\subsection{Realizability of Multiple Paths}\label{sec:multi_path}

Building upon the single-path case, a more interesting question is: \textbf{In an $N\times M$ square mesh, can we find a circuit configuration to realize $y$ paths, each of length $x$?}. In this section, we make the number of paths a variable $y$ (instead of fixing it to $1$ as the previous section did), and investigate the maximum value of $y$ given the value of $x$. As a quick example, when $x=1$, we know the maximum value of $y$ is $(2N+2M)$ because path $1$ is only realizable using peripheral TBUs.


\begin{theorem}\label{thm:multi_path_average}
For a fixed circuit configuration of an $N\times M$ square mesh, we collectively denote the lengths of all $(2N+2M)$ undirected optical paths by $\Gamma=[l_1,l_2,\cdots,l_{2N+2M}]$. Then the sum of all undirected optical paths $\sum_{i=1}^{2N+2M} l_i$ can be written in the format $(2N+2M+4k)$, for some $k\in\{0,1,\cdots,NM\}$. Moreover, when the path sum equals $(2N+2M+4k_0)$, then 
\begin{enumerate}[label = (\arabic*)]
    \item The path average $\Bar{\Gamma}=1+\frac{2k_0}{N+M}$.
    \item For any undirected optical path, its path length $1 \leq l_i \leq 4k_0+1$.
    \item The path variance $\sigma^2(\Gamma)\leq\frac{8k_0^2}{N+M}-\frac{4k_0^2}{(N+M)^2}$.
\end{enumerate}
and the bound in (2)-(3) are achievable.
\end{theorem}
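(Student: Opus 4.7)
The plan is to establish the algebraic form $\sum l_i = 2(N+M)+4k_0$ by counting the edges of the full TBU graph and showing that every interior cycle has length divisible by $4$. Items (1) and (2) are then arithmetic consequences, item (3) is a Jensen-type bound, and both achievability claims are realized by a single explicit construction.

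\textbf{Step 1 (form of the sum).} Each TBU contributes exactly two internal edges, so the graph has $2(2NM+N+M)=4NM+2N+2M$ edges. Every non-floating node pairs one H-port with one V-port and has degree $2$; every floating node has degree $1$. The graph therefore decomposes into the $2(N+M)$ optical paths of Theorem~\ref{thm:oobvious}(4) together with some number of interior cycles, giving
$$
\sum_{i=1}^{2(N+M)} l_i \;=\; 4NM+2N+2M \;-\; E_{\mathrm{cyc}},
$$
where $E_{\mathrm{cyc}}$ is the total edge-count of the cycles.

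\textbf{Step 2 (cycle lengths are multiples of $4$).} Since each non-floating node carries one H-port and one V-port, every cycle alternates H- and V-TBUs (the same reasoning as Theorem~\ref{thm:oobvious}(6)) and thus has even length $2k$. To upgrade this to $2k\equiv 0\pmod 4$, I would attach to each non-floating node its cell-corner coordinate $(r,c)\in\{0,\dots,N\}\times\{0,\dots,M\}$ and verify, over both states and both edges of every H-TBU, that traversing an H-edge changes $c$ by exactly $\pm 1$ while fixing $r$; the symmetric statement holds for V-edges. Because a cycle closes, the signed sum of $c$-displacements is zero, so the rightward and leftward H-edge traversals are equinumerous, and $k$ is even. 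Combined with $E_{\mathrm{cyc}}\le 4NM$ (which follows from $\sum l_i\ge 2(N+M)$), this yields $E_{\mathrm{cyc}}=4(NM-k_0)$ for some $k_0\in\{0,\dots,NM\}$, i.e., $\sum l_i=2(N+M)+4k_0$.

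\textbf{Step 3 (items (1)--(3)).} Item (1) is direct arithmetic. For (2), each $l_j\ge 1$, so $l_i=\sum_j l_j-\sum_{j\ne i}l_j\le(2(N+M)+4k_0)-(2(N+M)-1)=4k_0+1$. For (3), under the constraints $\sum l_j=2(N+M)+4k_0$ and $1\le l_j\le 4k_0+1$, a standard convexity/exchange argument shows that $\sum l_j^2$ is maximized at the extremal profile with one entry equal to $4k_0+1$ and the remaining $2(N+M)-1$ entries equal to $1$; substituting $\sum l_j^2\le(4k_0+1)^2+(2(N+M)-1)$ into $\sigma^2=\frac{1}{2(N+M)}\sum l_j^2-\bar\Gamma^2$ and simplifying produces the advertised bound.

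\textbf{Step 4 (achievability).} Both extremal inequalities in (2) and (3) are saturated by the same configuration: apply the zigzag construction of Theorem~\ref{thm:realize_single_path} restricted to the first $k_0$ cells (anchored at the floating nodes of $\text{TBU}_{10-11}$, in the spirit of Theorem~\ref{thm:maximum_length}) to produce a single optical path of length exactly $4k_0+1$, and set every remaining peripheral TBU to bar state so that its outer arm becomes a length-$1$ path between its two floating ports. The principal technical obstacle is verifying that the interior edges of the remaining non-peripheral TBUs in bar state close into interior cycles rather than merging with the long path or generating additional non-trivial paths; this is a direct layout inspection in the spirit of Theorem~\ref{thm:realize_single_path}.
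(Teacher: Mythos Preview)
Your proof is correct and follows the paper's overall architecture: count the $4NM+2N+2M$ edges, subtract interior cycles whose lengths are shown to be multiples of $4$, then derive (1)--(3) and exhibit the single-long-path construction for achievability. Two local differences are worth noting. For item~(2), the paper argues geometrically (a path sum of $2N+2M+4k_0$ means $k_0$ square cells are ``consumed'', so a path longer than $4k_0+1$ would force a contradiction with the cell count or with Theorem~\ref{thm:oobvious}(5)); your purely arithmetic derivation $l_i\le\sum_j l_j-(2N+2M-1)=4k_0+1$ is shorter and avoids the somewhat informal notion of consumed cells. For item~(3), the paper invokes the ready-made statistical inequality $\sigma^2\le(\mu-a)(b-\mu)$ for a variable confined to $[a,b]$, whereas you reprove the same bound by the extremal/convexity argument that underlies it; the two are equivalent, but the paper's citation is terser while yours is self-contained. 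Your Step~2 coordinate argument is also a more explicit version of the paper's one-line claim that a closed loop must contain an even number of H's and an even number of V's; the displacement idea you use is exactly the mechanism behind the paper's proof of Theorem~\ref{thm:length_depends_case} for Type~S paths.
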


\begin{proof}
We first prove that the path sum is in the format $(2N+2M+4k)$, for some $k\in\{0,1,\cdots,NM\}$. According to Theorem~\ref{thm:oobvious}~(1), there are $(2NM+M+N)$ TBUs in total. Since each TBU has two edges that can be used by the optical signal (regardless of bar or cross state), there are $(4NM+2N+2M)$ edges altogether. Furthermore, each edge occurs at most in one undirected optical path, so the maximum path sum equals $(4NM+2N+2M)$. We should notice that achieving length sum $(4NM+2N+2M)$ is an ideal case in which all of the edges of the square cell are used and that there is no untraveled closed loop in the circuit. Next, we consider the case when closed loops exist; we will show that each closed loop not being used by any undirected optical path must have length $4k$, which further implies that the path sum equals $(4NM+2N+2M - 4k)$, where $k=0,1,\cdots, NM$. Note that the construction for each $k\in\{0,1,\cdots,NM\}$ is already shown in the first row of Fig.~\ref{fig:construction}. The proof is straightforward if we notice that a closed loop indicates that the loop passes an even number of 'V's and even number of 'H's, so that it can start from a non-floating node and also ends at the same node. Thus, its length is $4k$ because the path length equals the sum of the number of 'V's and 'H's.


Now, we prove the three statements. The first statement about path average is trivial, because the path sum equals $(2N+2M+4k_0)$ while the number of paths is $(2N+2M)$. 

For the second statement, we first emphasize that Theorem~\ref{thm:maximum_length} is a special case of Theorem~\ref{thm:multi_path_average}~(2) when $k_0=NM$. Now, we notice that when the path sum equals $(2N+2M+4k_0)$, there will be $k_0$ square cells consumed by some undirected optical paths. If the maximum path length is larger than $(4k_0+1)$, following the same thought as in Theorem~\ref{thm:maximum_length}, we see that it will lead to at least $3$ non-floating nodes, or $(k_0+1)$ square cells used. Either case will result in a contradiction.

To prove the third statement, we retrieve a conclusion from statistics: If a random variable lies in the range $[a,b]$, and its mean is $\mu$, then its variance is upper bounded by $(\mu-a)(b-\mu)$. In our case,  substituting $a$ with $1$, $b$ with $(4k_0+1)$, and $\mu$ with $(1+2k_0/(N+M))$, proves statement~(3).

Last, but not least, we provide a construction achieving the bound shown in statements~(2) and~(3). We enforce one path length to be $(4k_0+1)$ and the remaining $(2N+2M-1)$ path lengths all to be $1$. The construction for such is already shown in the first row of Fig.~\ref{fig:construction}. Now, the variance is given by:
\begin{equation*}
\begin{aligned}
    \sigma^2(\Gamma)&=\frac{1}{2N+2M}\sum_{i=1}^{2N+2M}(l_i-\bar{\Gamma})^2\\
    &=\frac{(4k_0+1-\bar{\Gamma})^2 + (2N+2M-1)(1-\bar{\Gamma})^2}{2N+2M}
\end{aligned}
\end{equation*}
Further substituting the path mean $\bar{\Gamma}=1+2k_0/(N+M)$ as provided in Theorem~\ref{thm:multi_path_average} (1) into the above expression, we have:
\begin{equation*}
\begin{aligned}
    \sigma^2(\Gamma)&=\frac{1}{2N+2M}(4k_0-\frac{2k_0}{N+M})^2+(1-\frac{1}{2N+2M})\frac{4k_0^2}{(N+M)^2}\\
    &=\frac{8k_0^2}{N+M}-\frac{4k_0^2}{(N+M)^2}\\
\end{aligned}
\end{equation*}
which is exactly the expression shown in Theorem~\ref{thm:multi_path_average} (3).

\end{proof}

The following theorem quantifies the multi-path routing ability of an $N\times M$ square mesh.

\begin{theorem}(\textbf{Main Result II})\label{thm:upper_bound_multi_path}
If $y$ paths each of length $x$ can be realized in an $N\times M$ square mesh with a circuit configuration, then we have:
$$
y \leq \min \{\lfloor\frac{4NM}{x-1}\rfloor, 2M+2N, C_1, C_2\}
$$
where $\lfloor\cdot\rfloor$ represents rounding to the integer below. $C_1$ represents an additional constraint active under the case when $x$ is even:
\begin{equation*}
    \begin{aligned}
    C_1 = \left\{\begin{array}{cc}
         4, &\text{If $x$ is even}\\
         +\infty, &\text{Else}\\
    \end{array}\right.
    \end{aligned}
\end{equation*}
Similarly, $C_2$ is an extra constraint active under the case when $x\equiv 3 \pmod{4}$:
\begin{equation*}
    \begin{aligned}
    C_2=\left\{\begin{array}{cc}
          0, &\text{If $x\equiv 3 \pmod{4}$, $N$ and $M$ are even}\\
          0, &\text{If $x\equiv 3 \pmod{4}$, $N$ is even, $M$ is odd, $x\not\in\Gamma_M$} \\
          0, &\text{If $x\equiv 3 \pmod{4}$, $N$ is odd, $M$ is even, $x\not\in\Gamma_N$} \\
          0, &\text{If $x\equiv 3 \pmod{4}$, $N$ and $M$ are odd, $x\not\in\Gamma_N\cup\Gamma_M$} \\
          +\infty, &\text{Else}\\
    \end{array}\right.
    \end{aligned}
\end{equation*}
\end{theorem}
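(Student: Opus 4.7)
The bound to be established is a minimum of four terms, so the plan is to prove each as a separate upper bound on $y$. The first two are immediate counting arguments. For $y \leq \lfloor 4NM/(x-1)\rfloor$, a length-$x$ path has $x+1$ nodes, of which $2$ are its floating endpoints (Theorem~\ref{thm:oobvious}(5)) and the remaining $x-1$ are non-floating; since every non-floating node has degree exactly $2$, two paths sharing such a node would coincide locally there, so the $y$ paths use $y(x-1)$ distinct non-floating nodes, bounded above by $4NM$ (Theorem~\ref{thm:oobvious}(3)). For $y \leq 2N+2M$, each path uses $2$ distinct floating endpoints and distinct paths use distinct endpoints, giving $2y \leq 4N+4M$ via Theorem~\ref{thm:oobvious}(3).

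For the bound $y\leq 4$ when $x$ is even, the approach is a corner-by-corner analysis. By Theorem~\ref{thm:length_depends_case} and Corollary~\ref{cor:length_typeO_start_at_top}, every Type~S or Type~O path has odd length, so $x$ even forces every path to be Type~A (endpoints on two adjacent sides). I would partition these Type~A paths into the four corner classes $\{TL, TR, BL, BR\}$ and prove that each class contains at most one length-$x$ path, yielding the total bound of $4$. The sub-claim is: if two node-disjoint length-$x$ Type~A paths $P_1, P_2$ both span, say, the top and left sides, then by planarity one of them, say $P_1$, lies strictly inside the Jordan region bounded by $P_2$ together with the corresponding top and left boundary arcs. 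My plan is to derive a contradiction by showing that $P_2$ must traverse strictly more TBUs than $P_1$, using a step count that captures the extra horizontal and vertical cell-edges $P_2$ must cross in order to enclose $P_1$, while respecting the strict VHVH alternation on both paths.

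The last bound, $C_2=0$ in the specified $x\equiv 3 \pmod 4$ cases, is an immediate corollary of Theorem~\ref{thm:realize_single_path}: in each listed parity case for $(N,M)$, the hypothesis $x \not\in \Gamma_\star \cup \Gamma_M \cup \Gamma_N$ (restricted to whichever of $\Gamma_M, \Gamma_N$ are active for those parities) places $x$ outside the realizable set of single paths, so not even one length-$x$ path exists, forcing $y=0$. The main obstacle will be the nesting inequality in the $C_1$ argument: the intuition that the outer path in a nested same-length pair must be strictly longer is natural, but turning it into a rigorous step count requires careful propagation of the VHVH alternation along both paths and a precise accounting of how $P_2$ wraps around $P_1$. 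I expect the cleanest formulation is to close both paths into Jordan cycles using their respective top and left boundary arcs and invoke a planar counting identity relating $|P_2| - |P_1|$ to twice the number of cells strictly enclosed between them, which is positive whenever the two paths are distinct.
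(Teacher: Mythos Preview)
Your arguments for the bounds $2N+2M$, $\lfloor 4NM/(x-1)\rfloor$, and $C_2$ are correct. The non-floating-node count you use for $\lfloor 4NM/(x-1)\rfloor$ is a slightly different route from the paper's (which bounds the total path-length sum by the edge count $2N+2M+4NM$ and uses that the remaining $2N+2M-y$ paths each have length at least $1$), but the two arguments are equivalent and both yield $y(x-1)\le 4NM$.

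The gap is in your $C_1$ argument. Your nesting step assumes that two node-disjoint Type~A paths at the same corner are non-crossing as planar curves, so that one sits inside the Jordan region of the other. But in the drawing of Definition~1 a TBU in cross state is rendered with two crossing segments; hence two node-disjoint paths can, and typically will, cross visually whenever they traverse the two distinct edges of the same cross-state TBU. In that situation neither path is nested inside the other, and your proposed identity relating $|P_2|-|P_1|$ to the number of enclosed cells no longer holds without a further argument resolving the crossings. You would need either to pass to a genuinely planar redrawing of the configuration (and show your enclosed-cell count survives that redrawing) or to control the parity and location of crossings directly; neither is sketched.

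The paper avoids this issue altogether. Its argument for $C_1$ fixes a corner, writes a Type~A path of length $x$ as a VH-word, and shows that the node reached at the midpoint of the word is forced: the set of nodes reachable from the left boundary by the first half of the word and the set reachable (in reverse) from the top boundary by the second half intersect in a single node. Since any two paths through a common non-floating node coincide, at most one length-$x$ path can live at each corner, giving $C_1=4$. This ``forced midpoint'' argument is purely combinatorial on the VH structure and never appeals to planarity, so it is unaffected by visual crossings at cross-state TBUs.
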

\begin{proof}
The bound $(2M+2N)$ is trivial; since we have $(2M+2N)$ undirected optical paths in total, $y$ cannot be larger than this bound. The bound $C_2$ reuses the conclusion of Theorem~\ref{thm:realize_single_path}, imposing an additional requirement when $x\equiv 3 \pmod{4}$. Additionally, we have the following inequality because the path sum cannot be larger than $(2N+2M+4NM)$:

\begin{equation*}
    xy+\bar{a}(2N+2M-y)\leq 2N+2M+4MN
\end{equation*}
where $\bar{a}$ represents the average path length of the remaining $(2N+2M-y)$ paths. Noticing that $\bar{a}\geq 1$ and $2N+2M-y\geq 0$, we have:
\begin{equation*}
    xy+1(2N+2M-y)\leq 2N+2M+4MN \; \rightarrow \; y\leq\frac{4NM}{x-1} .
\end{equation*}
Our remaining task is to justify the bound $C_1$. In essence, the bound $C_1$ states that a square mesh at most realizes four paths of a specified length $x$ if $x$ is even. To prove this, we recall that in Theorem~\ref{thm:length_depends_case}, an even-length path is only possible if the path belongs to type A. We consider the case when $M$ and $N$ are both sufficiently large compared to $x$ (e.g., $M, N \geq 2x$). As we are merely looking for the upper bound of $y$, imposing this additional assumption is not restrictive; however, it also implies that the bound $C_1=4$ might not be tight for the case when $M$ and $N$ are comparable with $x$ (e.g., $x=4$, $N=1$, and $M=1$). Under this assumption, if there are $d_0$ paths of length $x$ realizable at the top left corner (i.e., with start node at the left side, and end node at the top side) of the square mesh, then by symmetry, there will be $4d_0$ paths of length $x$ realizable in total since there are four corners (see Fig.~\ref{fig:even_path}). In the following, we will prove that the maximum value of $d_0$ equals $1$ in the case $x=4$, justifying the bound $C_1$. Generalizing the proof to an arbitrary even $x$ is straightforward, and is omitted here.

Fist, as shown in Fig.~\ref{fig:even_path}, we prove that for all paths of length four realizable at the top-left corner using some circuit configurations, they all must pass through node $A$. As we already mention in the proof of Theorem~\ref{thm:length_depends_case}, a path of length four will follow 'VHVH'. We consider where the optical path is after passing through the first 'VH' pair, and is about to pass through the second 'VH' pair. For simplicity, this location is referred to as the intermediate node. Since the optical path starts from the left side, the intermediate node must be located at a node in the red circle shown in Fig.~\ref{fig:even_path} after passing through the first 'VH' pair. Furthermore, since the optical path will pass through the second pair 'VH' and then reach the top side, we reverse this process and find that the intermediate node must be located in the pink circle. Thus, the intersection of the red and pink circle uniquely determines that a path of length four at the top-left corner must pass through node~A.

Next, we state an obvious conclusion: For a fixed circuit configuration, if we know that two undirected optical paths pass through the same node, then these two paths actually are identical (i.e., they refer to the same path). With the above two pieces, we can complete our proof. Given a circuit configuration, we assume that there are two different paths of length four synthesizable at the top-left corner. Using the first conclusion, we know both of them pass through node A. Then, using the second conclusion, we see that these two paths are actually identical.

\begin{figure}[!htb]
    \centering
    \includegraphics[width=0.8\linewidth]{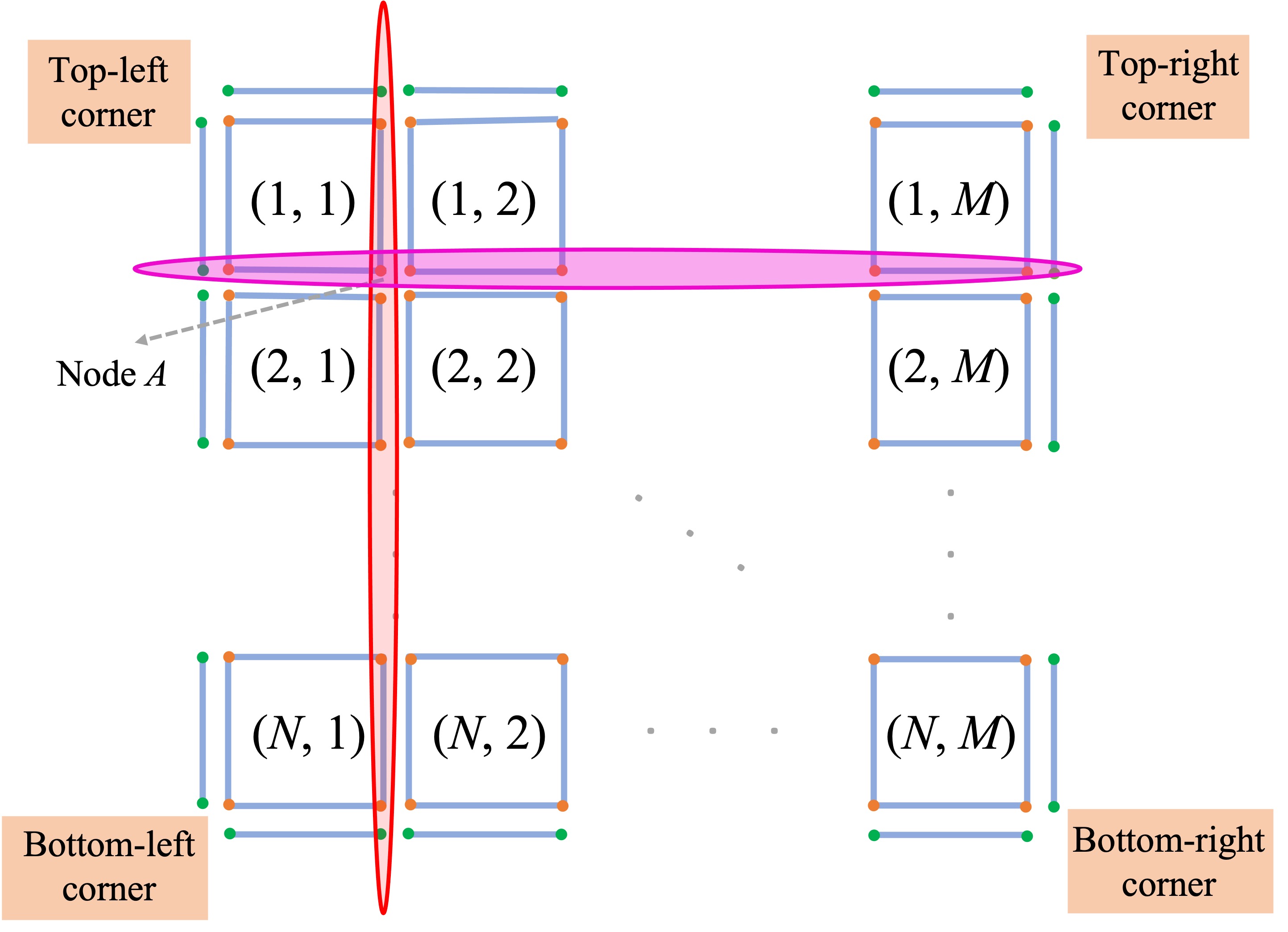}
    \caption{For all paths of length four realizable at the top-left corner using some circuit configurations, they all must pass through node $A$, which is the intersection of the red and pink circle.}
    \label{fig:even_path}
\end{figure}
\end{proof}

In Fig.~\ref{fig:visualization_theorem_y_upper_bound}~(a), we use a $2\times 3$ square mesh as an example to demonstrate how tight our provided upper bound in Theorem~\ref{thm:upper_bound_multi_path} is. Note that in a $2\times 3$ square mesh, there are already $17$ TBUs, resulting in $2^{17}$ circuit configurations. Brute-force enumeration of all configurations is only barely time affordable.

\begin{figure}[!htb]
    \centering
    \includegraphics[width=1.0\linewidth]{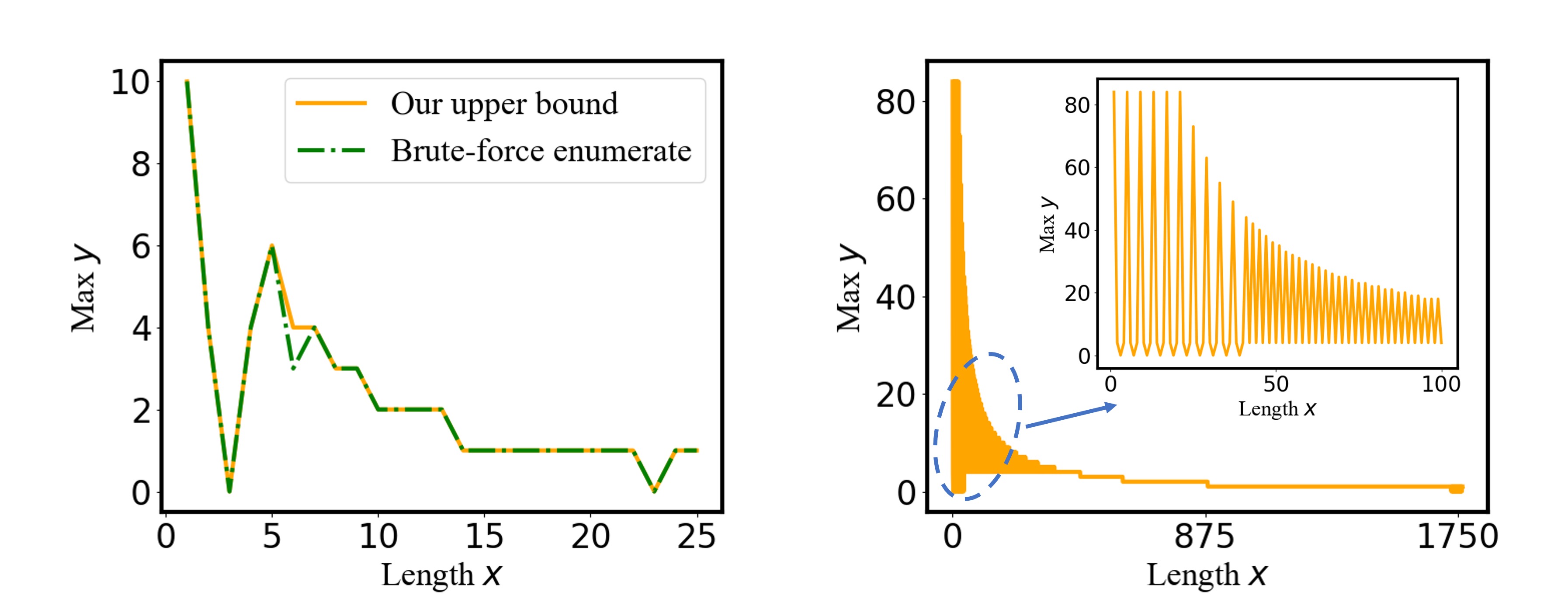}
    \caption{(a)~In a $2\times 3$ square mesh, comparison of our upper bound provided in Theorem~\ref{thm:upper_bound_multi_path} and the real maximum value of $y$ by brute-force enumerating all circuit configurations. (b)~Plot of our estimated upper bound for $y$ versus different $x$ in a $21\times 21$ square mesh.}
    \label{fig:visualization_theorem_y_upper_bound}
\end{figure}

Extending the aforementioned theorems to other topologies, such as triangular and hexagonal meshes, is straightforward and is summarized in the Appendix. An important observation from such analysis is that square meshes are not able to implement some lengths in the range $[1,4NM]$, while triangular and hexagonal meshes do not have this problem. In the next section, we will discuss some potential implications and usage of our findings.

\section{Implications and Applications}\label{sec:applications}

In this section, we present several potential applications based on our findings. Note that the PPIC we present here is of relatively small size consistent with demonstrations to date in the literature
(e.g., $1\times 2$ square mesh shown in Fig. 3 of~\cite{Zhuang15programmable}). Our described applications become even more appealing as PPIC sizes scale up in the future.

\subsection{Guidelines for Setting $N$ and $M$}

In this subsection, we demonstrate how to use our derived conclusions to determine the appropriate values of $N$ and $M$ given a collection of integer path lengths $\Lambda$ that we wish to realize on a square mesh. Note that this question form has real-world implications. For instance, when we want to synthesize a fourth order finite impulse filter with a square mesh as the delay line, then $\Lambda$ should be an arithmetic sequence with four elements (e.g., $[2,4,6,8]$). As another example, if we want to use the square mesh to route four optical signals while keeping their phases equal, then $\Lambda$ should be a collection containing four identical integers (e.g., $[3,3,3,3]$).

Consider the first example where we want to realize path length collection $\Lambda_1=[2,4,6,8]$ on a square mesh. Suppose we want the row and column to be balanced (i.e., $N=M=x\in\mathbb{Z}^{+}$), and we try to determine $x$ using our previous conclusions. Theorem~\ref{thm:maximum_length} imposes the first inequality constraint: $8\leq 4NM+1=4x^2+1$, which gives $x\geq 2$. Then, we use the path sum conclusion in Theorem~\ref{thm:multi_path_average}, leading to another inequality constraint: $2+4+6+8< 2N+2M+4NM=4x+4x^2$, which gives $x\geq 2$. As readers can explore on their own, a $2\times 2$ square mesh indeed can realize $\Lambda_1=[2,4,6,8]$ simultaneously. Finally, we emphasize that as our theorems provide upper bounds, so using these theorems serve as a necessary condition, but might not be sufficient. Namely, $x<2$ is definite to fail for $\Lambda_1=[2,4,6,8]$, but $x=2$ might not be sufficient either, and $2$ is only a starting search point.

The first example is of relative small scale, and even a direct guess without knowing our conclusions might attain a good result. In the second example, we consider $\Lambda_2 = [6,10,14,18,22,26]$ and balanced row and column (i.e., $N=M=x$). Estimating $x$ without knowing our conclusions is quite difficult. On the other hand, our Theorem~\ref{thm:maximum_length} requires: $26 \leq 4x^2+1$, which gives $x\geq 3$. Moreover, the path sum conclusion in Theorem~\ref{thm:multi_path_average} requires: $6+10+14+18+22+26\leq 2N+2M+4NM = 4x+4x^2$, yielding $x\geq 5$.
Since both of these two constraints must be satisfied, we suggest starting from a $5$-by-$5$ square mesh for this particular $\Lambda_2$. To end this example, notice that a configuration realizing $\Lambda_2 = [6,10,14,18,22,26]$ on an $N$-by-$M$ square mesh with TBU length $L$ and TBU loss $\alpha$ is equivalent to another configuration realizing $\Lambda^\prime = [3,5,7,9,11,13]$ on a square mesh with TBU length $L^\prime=2L$ and TBU loss $\alpha^\prime={\alpha}^2$ from the perspective of magnitude and phase response, as evidenced by Eq.~(\ref{eq:path_length_response}). However, we must emphasize that in terms of synthesizing $\Lambda$ and $\Lambda^\prime$, they are completely different. Specifically, path length $3$ is required in $\Lambda^\prime$, and as stated in Theorem~\ref{thm:realize_single_path}, either $N$ or $M$ must be $1$. This will be more clear in the following third example.

In the third example, we consider $\Lambda_3=[3,5,7,9,11,13]$. Note that in the previous $\Lambda_1$ and $\Lambda_2$, the required path lengths are all even, so that they will not have remainders equal to $3$ when divided by $4$. However, $3$ and $11$ in $\Lambda_3$ both have remainder of $3$ when divided by $4$. As stated in Theorem~\ref{thm:realize_single_path}, at least one of $M$ or $N$ must be odd to handle these cases. Moreover, path length $3$ is even more special; when substituting it as $d$ into the inequality given in Theorem~\ref{thm:realize_single_path}, it requires either $M=1$ or $N=1$. Without loss of generality, we assume $N=1$. As in our previous examples, using Theorem~\ref{thm:maximum_length} and the path sum conclusion, we have: $13\leq 4NM+1$ and $3+5+7+9+11+13\leq 2N+2M+4NM$, yielding $M\geq 8$. As a byproduct, this example suggests that careful treatment needs to be taken if a path length of remainder $3$ when being divided by $4$ is present.

In the fourth example, we consider a reverse example. Can a $2$-by-$2$ square mesh implement $\Lambda_4=[1,18]$? We find $18$ is larger than the maximum allowed path length $4NM+1=17$, so the answer is no. Then, what about $\Lambda_4=[1,2,4,5,8,10]$? We observe that the sum of $\Lambda_4$ is $30$ larger than the allowed $4NM+2N+2M=24$, so the answer is no. As a further follow up, what about $\Lambda_4=[1,1,1,1,2,4,5,10]$? The answer is still no, because now we have eight paths in $\Lambda_4$, reaching the maximum allowable paths $2N+2M=8$ realized in a $2$-by-$2$ square mesh. This motivates us to calculate the mean of $\Lambda_4$, which equals $3.125$ and cannot be written in the form shown in (1) of Theorem~\ref{thm:multi_path_average}. In summary, our theorems provide several criteria to rule out unreasonable path length collections for a predefined size of square mesh with almost no cost. Note that in this reverse example, we try to detect which inequality constraints defined in our theorems $\Lambda_4$ violate. If any of these constraints are violated, we can be certain that $\Lambda_4$ is impossible to realize. However, to fully prove its feasibility when none is violated, we would need to construct a solution.


\subsection{Inverse Measurement and Characterization}

In this subsection, we demonstrate how to use our findings to inversely characterize the value of $\alpha$ once a square-mesh PPIC chip is fabricated. We set all TBUs to cross state, so that the length sum of the $(2N+2M)$ optical paths equals $(2N+2M+4NM)$. We inject $(2N+2M)$ optical sources independently from each of the input nodes, and measure the signals at the corresponding output nodes. Based on the measurement results, we can calculate the ratio of the output signal over the input signal, denoted by $\{r_1,r_2,\cdots,r_{2N+2M}\}$, where each $r_i$ is a complex scalar, including both the information of magnitude and phase response. Based on Eq.~(\ref{eq:path_length_response}), it is straightforward that we have the following relation:

\begin{equation}
    \prod_{i=1}^{2N+2M} |r_i| = \alpha^{\sum_{i=1}l_i}=\alpha^{2N+2M+4MN}
\end{equation}
where $l_i$ represents the length of the $i$-th optical path. It implies that once we have the measurements $\{r_1,r_2,\cdots,r_{2N+2M}\}$, we can inversely estimate $\alpha$ by:

\begin{equation}\label{eq:inverse_alpha}
    \alpha = \exp(\frac{\sum_{i=1}\log |r_i|}{2N+2M+4NM}) .
\end{equation}

Note that in real fabrication, process variation exists, and it is likely each TBU will have a slightly different $\alpha$, denoted by $\{\alpha_1,\alpha_2,\cdots,\alpha_{2N+2M}\}$. Under this circumstance, Eq.~(\ref{eq:inverse_alpha}) actually yields an estimation of their geometric mean, i.e., $\sqrt[2N+2M+4NM]{\alpha_1\alpha_2\cdots\alpha_{2N+2M}}$. As a byproduct, readers might be curious if we can estimate each individual $\{\alpha_1,\alpha_2,\cdots,\alpha_{2N+2M}\}$, not solely their geometric mean. Unfortunately, ambiguity arises when we try to do so as shown in Fig.~\ref{fig:not_unique}.

\begin{figure}[!thb]
    \centering
    \includegraphics[width=0.8\linewidth]{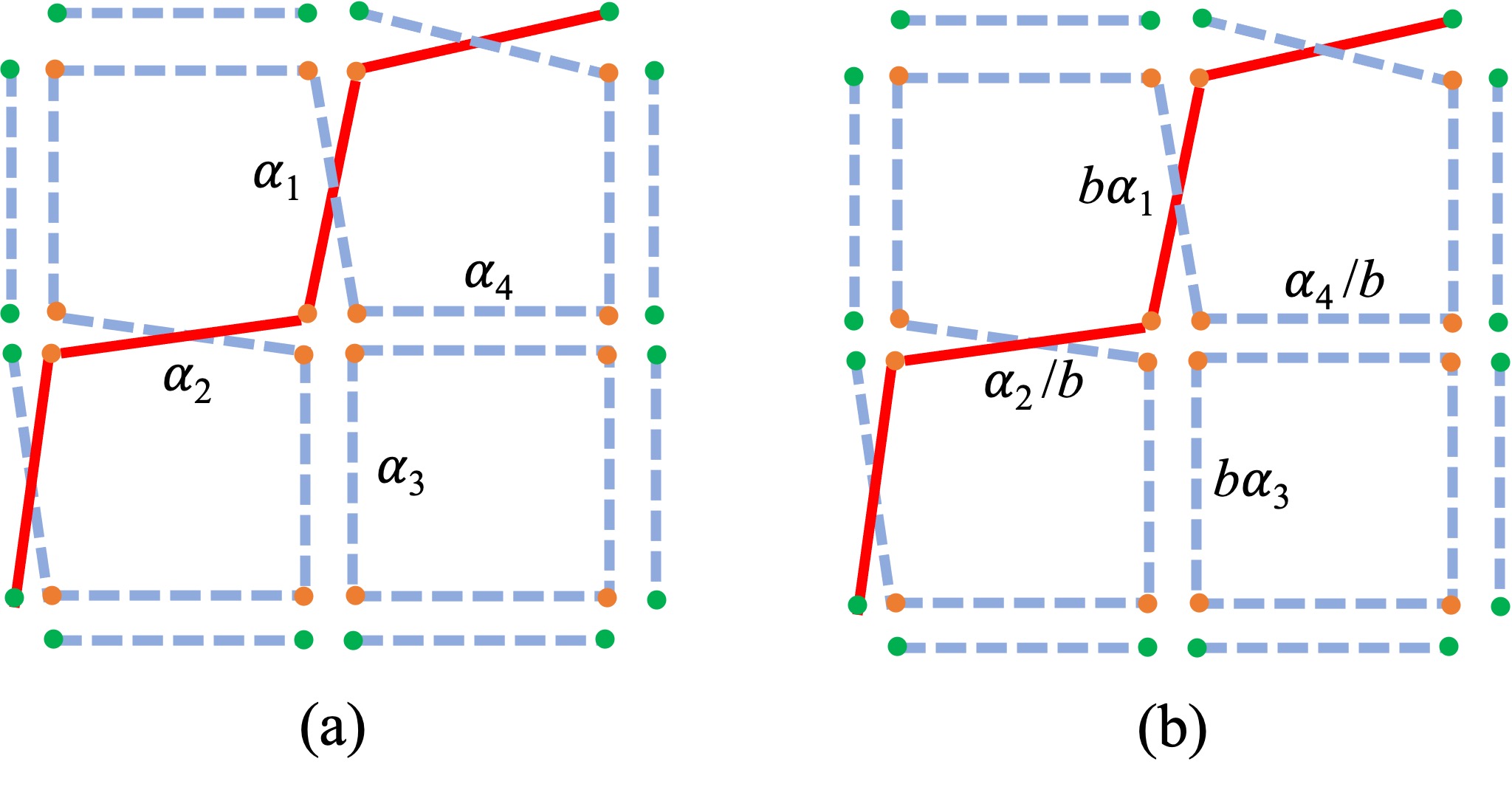}
    \caption{Ambiguity occurs when attempting to determine each individual $\alpha$ value for non-peripheral TBUs. If $\{\alpha_1,\alpha_2,\alpha_3,\alpha_4\}$ is a solution as shown in (a), then (b) will also be a solution. As an example, if $\{\alpha_1,\alpha_2\}$ satisfy the measurement result of the red optical path, then $\{b\alpha_1,\alpha_2/b\}$ will also satisfy the measurement result for some constant $b$.}
    \label{fig:not_unique}
\end{figure}

By analogy to inversely measuring $\alpha$, we can do the same thing for the TBU length $L$. Still setting all TBUs to cross state, then the variable $q$ in Eq.~(\ref{eq:path_length_response}) for any optical path will be zero. Thus, we have the following relation:

\begin{equation}
\begin{aligned}
\sum_{i=1}^{2N+2M} \text{Phase}(r_i) &= -(\sum_{i=1}^{2N+2M} l_i)\cdot\frac{\omega n_\text{eff} L}{c}\\
&=-(2N+2M+4NM)\frac{\omega n_\text{eff} L}{c} 
\end{aligned}
\end{equation}
Note that the equality sign holds under the addition of integer multiples of $2\pi$ (denoted by $2d\pi$ later). If we assume that $n_\text{eff}$ is known, then we can inversely measure the TBU length variable $L$ via:
\begin{equation}\label{eq:estimator_L}
    L=[-\sum_{i=1}^{2N+2M} \text{Phase}(r_i)  + 2d\pi]\frac{c}{(2N+2M+4NM)\omega n_\text{eff} }
\end{equation}
where $d$ is an integer added to make the estimated $L$ meaningful. We note that it is expected that ambiguity of the estimated value of $L$ occurs when doing this inverse characterization due to the periodicity of phase. Similarly, due to process variation, each TBU will have a slightly different length parameter $L$, denoted by $\{L_1,L_2,\cdots,L_{2N+2M}\}$, and Eq.(~\ref{eq:estimator_L}) then estimates their arithmetic mean. 

To end this subsection, we consider a variant of the above approach, focusing on local characterization. We take inverse measurement of $\alpha$ as an example. If we wisely set the TBUs into bar/cross state, we can exclude a few TBUs from being passed though by any of the $(2N+2M)$ optical paths, and the length sum will equal $(2N+2M+4k_0)$, as depicted in Theorem~\ref{thm:multi_path_average}. In this case, the denominator in Eq.~(\ref{eq:inverse_alpha}) should be revised to $(2N+2M+4k_0)$ accordingly, and the calculated $\alpha$ becomes a 'local' estimation. When only part of the square-mesh PPIC is exploited in an application, this local inverse measurement might be more accurate than the previous global one, and of particular interest.

\section{Conclusions}\label{sec:conclusion}

In this paper, we theoretically investigate the routing ability of programmable photonic integrated circuits under the assumption that TBUs are either in cross or bar state. Such an assumption is reasonable to be made in an optical routing application. Based on the compact model of the TBU, we first show that the path length (defined as the number of TBUs that a path passes through) is decisive in the signal response, affecting both phase and magnitude. Next, we provide several theorems rigorously determining what path length is realizable in single-path routing. Then, we approach multi-path routing, providing analytical expressions for the path length sum, and an upper bound on path length variance and the maximum number of realizable paths. Finally, a number of potential optical applications using our observations are illustrated.

\vspace{6pt}
\paragraph{Funding.} Xiangfeng Chen and Wim Bogaerts received funding from the European Research Council through grant 725555 (PhotonicSWARM).

\paragraph{Disclosures.} The authors declare that there are no conflicts of interest related to this article.

\paragraph{Data availability.} Data underlying the results presented in this paper are not publicly available at this time but may be obtained from the authors upon reasonable request.

\bibliography{sample}

\section*{Appendix}\label{sec:appendix}

The conclusions and proofs shown in the main text can be well generalized to other topologies such as hexagonal mesh, triangular mesh, with a few light modifications. In the Appendix, we consider two popular topologies, parallelogram hexagonal mesh, as shown in Fig.~\ref{fig:hexagonal}, and parallelogram triangular mesh, as shown in Fig.~\ref{fig:triangular}. Other topology variants (e.g., concentric hexagonal mesh) won't be covered, while readers can derive themselves by relying on the analysis methods we provide. Theorem~\ref{thm:hexagonal} summarizes the conclusions for a parallelogram hexagonal mesh.

\begin{figure}[!htb]
    \centering
    \includegraphics[width=0.5\linewidth]{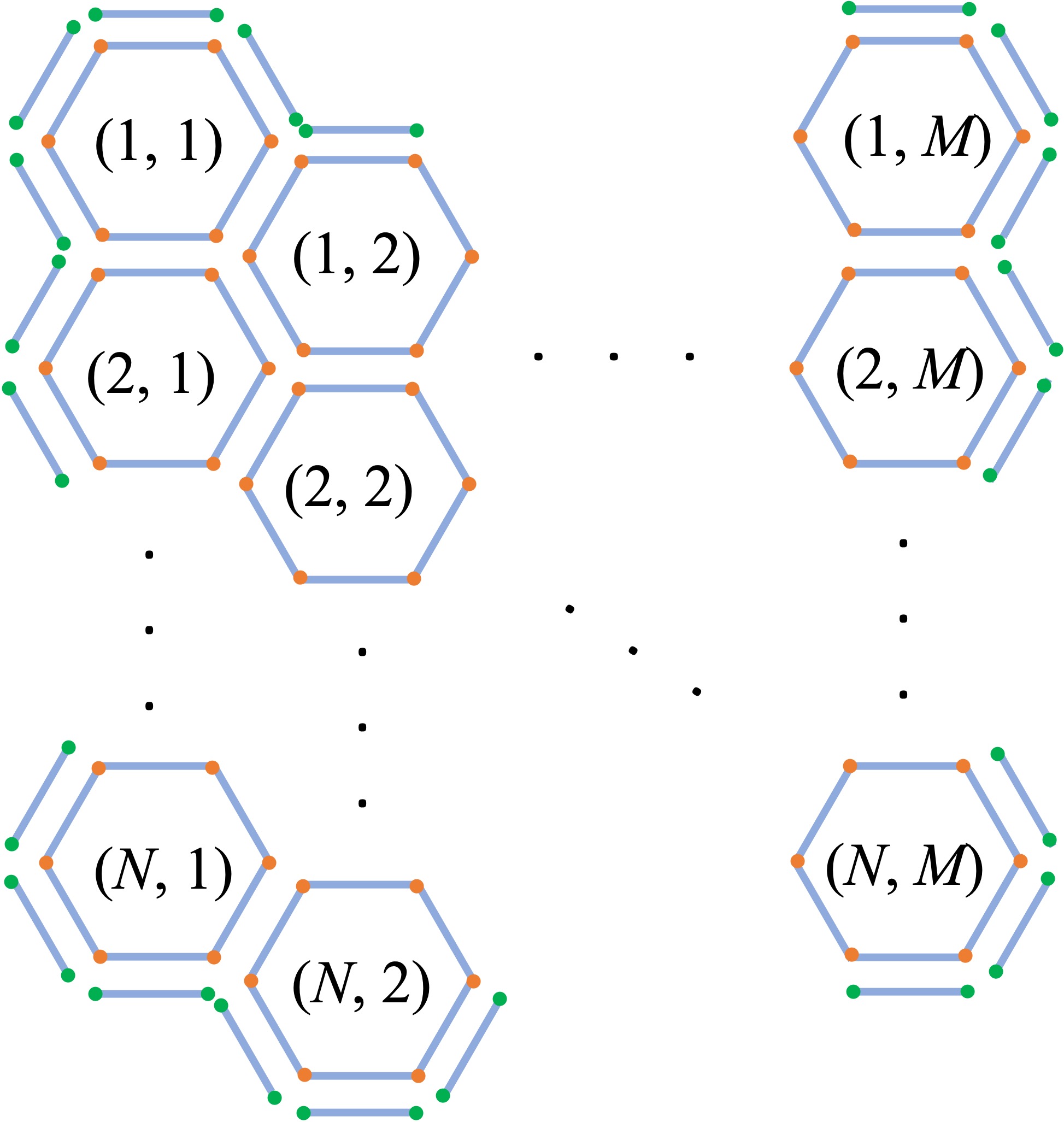}
    \caption{A parallelogram hexagonal mesh of size $N\times M$.}
    \label{fig:hexagonal}
\end{figure}

\begin{figure}[!htb]
    \centering
    \includegraphics[width=0.8\linewidth]{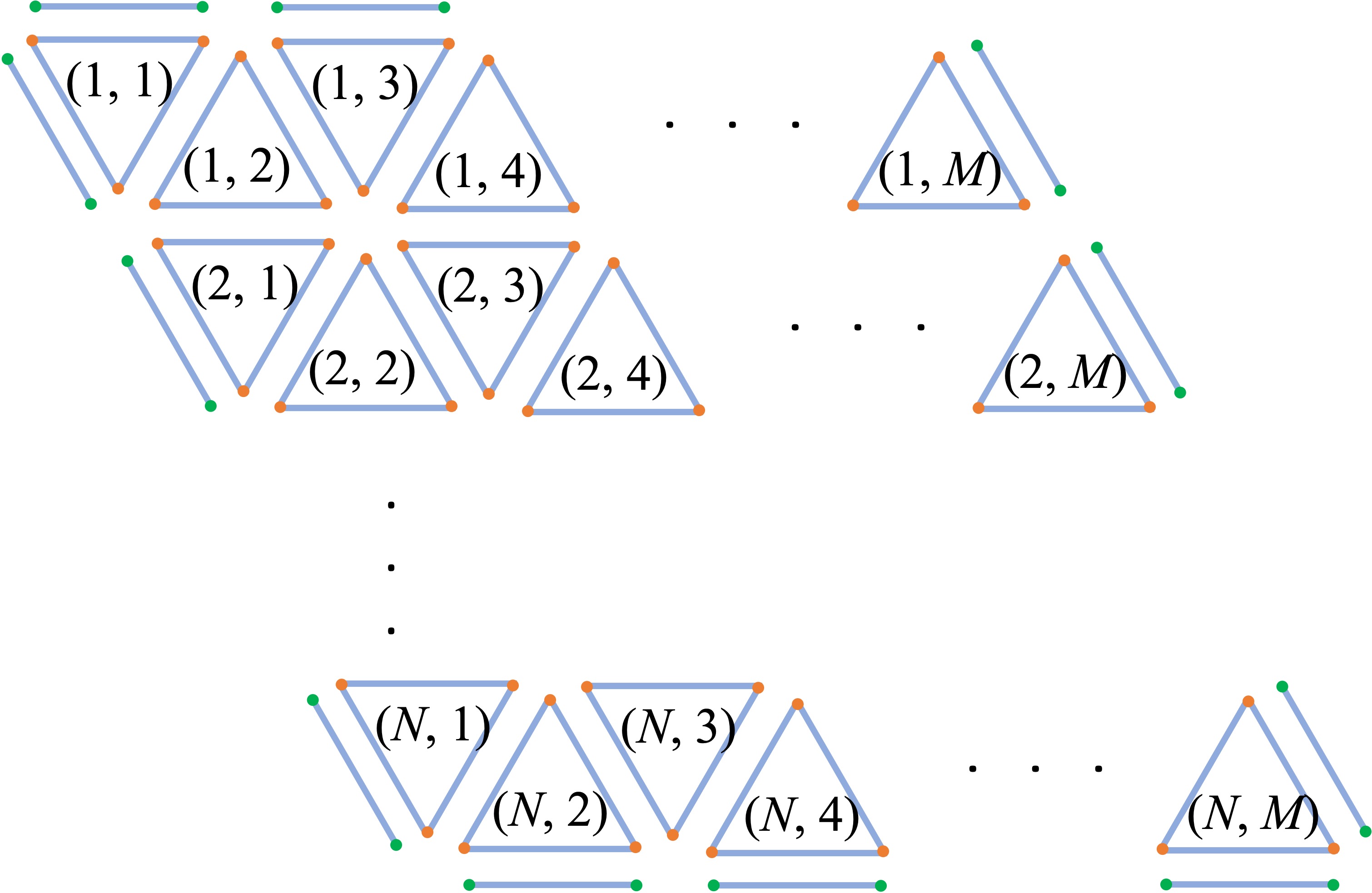}
    \caption{A parallelogram triangular mesh of size $N\times M$. }
    \label{fig:triangular}
\end{figure}

\begin{figure*}[!tb]
    \centering
    \includegraphics[width=0.8\textwidth]{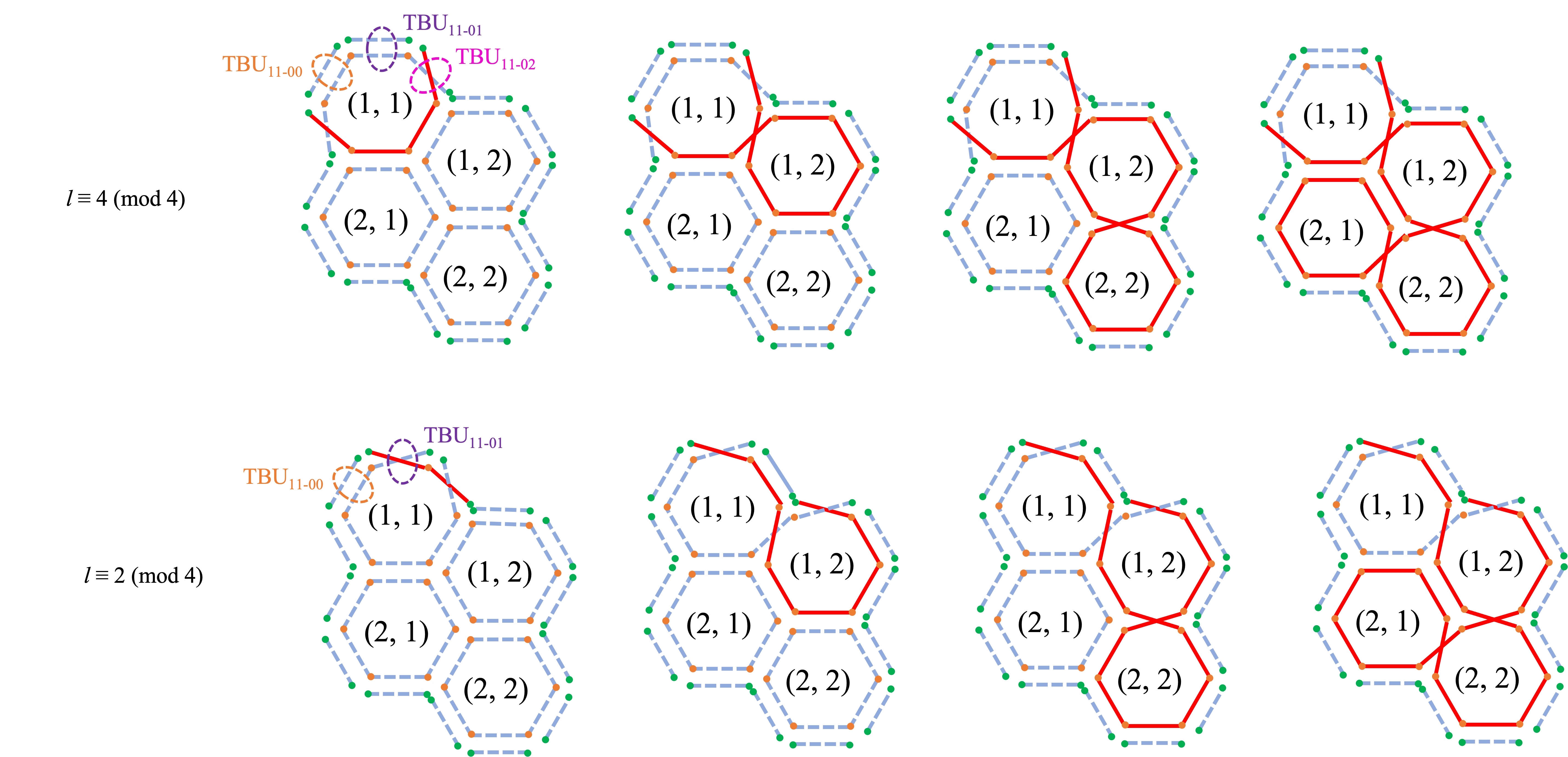}
    \caption{Top row: Construction for finding a circuit configuration to realize $l\equiv 4 \pmod{6}$. Bottom row: Construction for finding a circuit configuration to realize $l\equiv 2 \pmod{6}$. }
    \label{fig:construction_hexagonal}
\end{figure*}

\begin{theorem}\label{thm:hexagonal}
For a $N\times M$ parallelogram hexagonal mesh, we have the following conclusions:
\begin{enumerate}[label = (\arabic*)]
    \item There are $(4N+4M-2)$ peripheral TBUs, and $(3NM-2N-2M+1)$ non-periperhal TBUs; There are $(8N+8M-4)$ floating nodes, and $6NM$ non-floating nodes. There are $(4N+4M-2)$ undirected optical paths in total for one circuit configuration.
    \item Denote an integer set $\Gamma_\star=\{d\in Z^{+}\,|\,1\leq d\leq 6NM+1\}$. Any path length $x\in\Gamma_\star$ is realizable via some circuit configuration, while $x\not\in\Gamma_\star$ is not.
    \item Denote the lengths of all $a$ undirected optical paths by $\Gamma=[l_1,l_2,\cdots,l_{a}]$. Then the sum of all undirected optical paths $\sum_{i=1}^{a} l_i$ can be written in the format $(4N+4M-2+6k)$, for some $k\in\{0,1,\cdots,NM\}$. Moreover, when the path sum equals $(4N+4M-2+6k_0)$, then the path average $\bar{\Gamma}=1+\frac{3k_0}{2N+2M-1}$; the path length $1\leq l_i\leq 6k_0+1$; the path variance $\sigma^2(\Gamma)\leq \frac{18k_0^2}{2N+2M-1}-\frac{9k_0^2}{(2N+2M-1)^2}$.
    \item If $y$ paths each of length $x$ can be realized with some circuit configuration, then we have: 
    $$y \leq \min \{\lfloor\frac{6NM}{x-1}\rfloor, 4N+4M-2\}$$
\end{enumerate}
\end{theorem}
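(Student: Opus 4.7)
The plan is to mirror the four-part development of Theorems \ref{thm:oobvious}, \ref{thm:realize_single_path}, \ref{thm:multi_path_average}, and \ref{thm:upper_bound_multi_path}, adapting each argument to the three-orientation structure of the hexagonal lattice. Part (1) follows by direct enumeration on the parallelogram: each of the $NM$ hexagons contributes six incident TBUs, non-peripheral TBUs are shared by two adjacent hexagons while peripheral TBUs belong to a single hexagon, and every TBU has four ports of which exactly half become floating when the TBU is peripheral. The stated path count $4N+4M-2$ is then half the floating-node count by the obvious hexagonal analogue of Theorem \ref{thm:oobvious}(5), since every undirected optical path starts and ends at two distinct floating nodes.

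Part (2) is the substantive new work. The upper bound $x\le 6NM+1$ is immediate from the non-floating node count by the same argument as in Theorem \ref{thm:maximum_length}: a path of length $x$ visits $x+1$ nodes, at most two of which are floating, so $x-1\le 6NM$. For realizability of every integer in the range, I would construct an explicit snake path family traversing the hexagons in row-major zigzag order analogous to Theorem \ref{thm:realize_single_path}, with the configurations in Fig.~\ref{fig:construction_hexagonal} serving as templates for the residues $l\equiv 2,4 \pmod 6$. The crucial structural difference from the square mesh is that at every non-floating node the path may exit along either of two remaining edge orientations rather than being forced into the rigid V/H alternation; this extra freedom allows the construction to increment the path length by $1$ per added TBU and thereby cover every residue class of $x \pmod 6$, eliminating the $\equiv 3\pmod 4$ obstruction that restricted the square mesh.

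For parts (3) and (4), the square-mesh arguments transfer with the constant $4$ replaced by $6$. The total edge count is $2(3NM+2N+2M-1)=6NM+4N+4M-2$, which upper-bounds the path sum. Any closed loop untraveled by the optical paths must use a balanced count of the three orientations and therefore has length divisible by $6$, producing the format $(4N+4M-2)+6k$ for the path sum. The mean is algebra, the per-path bound $l_i\le 6k_0+1$ follows from the consumed-cell argument of Theorem \ref{thm:multi_path_average}(2), and the variance bound follows from the Bhatia--Davis inequality $\sigma^2\le (\mu-a)(b-\mu)$ with $a=1$, $b=6k_0+1$, and the computed mean. Part (4) then combines the trivial bound $y\le 4N+4M-2$ with the path-sum inequality $xy+((4N+4M-2)-y)\le 6NM+4N+4M-2$ to yield $y\le 6NM/(x-1)$.

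The main obstacle I anticipate is in part (2): rigorously demonstrating that the three-orientation freedom removes every modular obstruction requires a careful case analysis of how a path can enter and exit a hexagonal cell, and constructing a uniform snake that attains each residue of $x \pmod 6$ in $[1,6NM+1]$ is substantially more delicate than the four-residue square-mesh construction. The two templates in Fig.~\ref{fig:construction_hexagonal} illustrate the idea but do not exhaust the residue classes, and the full residue-by-residue verification, together with checking that the snake terminates with the orientation required to produce a valid type-S, type-A, or type-O endpoint configuration, is the only step that does not reduce to routine bookkeeping.
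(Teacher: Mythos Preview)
Your plan matches the paper's own treatment essentially step for step: the paper states that all four parts ``can be similarly proved following our treatment of the square mesh,'' supplies exactly the snake-construction templates in Fig.~\ref{fig:construction_hexagonal} that you cite for two residue classes mod~$6$, and remarks that the remaining residues are obtained by toggling a few boundary TBUs---precisely the residue-by-residue verification you flag as the main obstacle.

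One caution on part~(3): your claim that a closed loop ``must use a balanced count of the three orientations and therefore has length divisible by~$6$'' needs more care. Equal orientation counts would by themselves give only divisibility by~$3$, and unlike the rigid VHVH alternation of the square mesh, the three-orientation hexagonal structure does not force the per-orientation counts to be equal (a cycle around two adjacent hexagons in the underlying lattice, for instance, has length~$10$). The paper itself gives no argument whatsoever for this step and simply asserts the $6k$ formula, so your sketch is no less complete than the original; but you should expect to replace that one sentence with a genuine argument in the port graph rather than the heuristic you wrote.
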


All statements above can be similarly proved following our treatment of the square mesh. In Fig.~\ref{fig:construction_hexagonal}, we demonstrate how to construct a circuit configuration to realize a desired path of length $x\in\Gamma_\star$ in the case of a $2\times 2$ hexagonal mesh. Specifically, the top row and bottom row respectively demonstrate the construction for $l\equiv 4 \pmod{6}$ and $l\equiv 2 \pmod{6}$. To realize $l\equiv 3 \pmod{6}$, we slightly modify the construction method shown in the bottom row by changing the purple $\text{TBU}_{11-01}$ and the orange $\text{TBU}_{11-00}$ to bar and cross state, respectively. Then the resulting red trajectory will have length satisfying $l\equiv 3 \pmod{6}$. The key idea here is the same as shown in Fig.~\ref{fig:construction}. We initially consume cell $(1,1)$ and $(1,2)$ to realize path length $8$ or $9$. Then, when the desired path length increases to $6$, we consume one more cell following a zigzag order: $(1,1),(1,2),\cdots,(1,M),(2,M),\cdots, (2,1),(2,2),\cdots$.  Similarly, the remaining cases $l\equiv 5,0,1 \pmod{6}$ can be handled by modifying the pink, purple, and orange TBUs in the top row. To end this section, Theorem~\ref{thm:triangular} summarizes the conclusions for a parallelogram triangular mesh. The construction method is similar to that for parallelogram hexagonal mesh, and is omitted.

\begin{theorem}\label{thm:triangular}
We assume $M$ is even, so that the $N\times M$ triangular mesh shown in Fig.~\ref{fig:triangular} has a parallelogram shape. We have the following conclusions:
\begin{enumerate}[label = (\arabic*)]
    \item There are $(2N+M)$ peripheral TBUs, and $((3N-1)\frac{M}{2}-N)$ non-periperhal TBUs; There are $(4N+2M)$ floating nodes, and $3NM$ non-floating nodes. There are $(2N+M)$ undirected optical paths in total for one circuit configuration.
    \item Denote an integer set $\Gamma=\{d\in Z^{+}\,|\,1\leq d\leq 3NM+1\}$. Any path length $x\in\Gamma_\star$ is realizable via some circuit configuration, while $x\not\in\Gamma_\star$ is not.
    \item Denote the lengths of all undirected optical paths using a set: $L=[l_1,l_2,\cdots,l_{2N+M}]$. Then the sum of all undirected optical paths $\Gamma=\sum_{i=1}^{2N+M} l_i$ can be written in the format $(2N+M+3k)$, for some $k\in\{0,1,\cdots,NM\}$. Moreover, when the path sum equals $(2N+M+3k_0)$, then the path average $\bar{\Gamma}=1+\frac{3k_0}{2N+M}$; the path length $1\leq l_i\leq 3k_0+1$; the path variance $\sigma^2(\Gamma)\leq \frac{9k_0^2}{2N+M}-\frac{9k_0^2}{(2N+M)^2}$.
    \item If $y$ paths each of length $x$ can be realized with some circuit configuration, then we have: 
    $$y \leq \min \{\lfloor\frac{3NM}{x-1}\rfloor, 2N+M\}$$
\end{enumerate}
\end{theorem}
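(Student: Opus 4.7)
The plan is to mirror the four-part structure already developed for the square and hexagonal cases, proving statements (1)--(4) of Theorem~\ref{thm:triangular} in sequence. The fundamental new ingredient, distinguishing the triangular mesh from the square and hexagonal ones, is that the minimal closed loop encloses a single triangular cell and therefore has length $3$, leading to loops of length $3k$ and the $3k$ increments in the path-sum formula.

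For statement (1), I would count TBUs, nodes, and paths by direct enumeration of the parallelogram triangular lattice. With $M$ even, each row of the parallelogram contains $M/2$ upward and $M/2$ downward triangles, and I would tally TBUs by orientation (one horizontal and two diagonal), sorting them into peripheral versus non-peripheral by checking which have at least one vertex on the parallelogram boundary. The $(4N+2M)$ floating nodes lie along the four sides, the $3NM$ non-floating nodes are interior vertices, and the count $(2N+M)$ of undirected optical paths follows as half the number of floating nodes via the two-endpoint property generalizing Theorem~\ref{thm:oobvious}~(5). For statement (2), the upper bound $(3NM+1)$ reproduces the pigeonhole argument of Theorem~\ref{thm:maximum_length}: any longer path would have more than $3NM+2$ nodes and hence visit at least three floating nodes, contradicting the two-endpoint property. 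Achievability for every integer in $[1, 3NM+1]$ would be established by a zigzag construction through cells $(1,1), (1,2), \ldots, (1,M), (2,M), \ldots$ analogous to Fig.~\ref{fig:construction}, in which toggling one TBU at a time steps the path length down by exactly one; no modular restriction appears because a triangular-mesh path can turn through three TBU orientations rather than the strict ``VHVH'' alternation of the square mesh.

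For statement (3), I would argue that the circuit has exactly $2N+M+3NM$ edges (twice the total TBU count), and that each edge appears in at most one undirected path. The shortfall between this maximum and the realized path sum is absorbed by untraversed closed loops, each of which must have length a multiple of $3$ because the minimal loop encloses a single triangular cell. This yields path sum $(2N+M+3k_0)$ for some $k_0 \in \{0,1,\ldots,NM\}$. The path average then follows by dividing by the $(2N+M)$ paths, the per-path bound $l_i \leq 3k_0+1$ comes from repeating the pigeonhole argument restricted to the $k_0$ consumed cells, and the variance bound follows from the elementary inequality $\sigma^2 \leq (\mu-a)(b-\mu)$ used in the proof of Theorem~\ref{thm:multi_path_average}. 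Statement (4) then combines the trivial bound $y \leq 2N+M$ with the path-sum constraint $xy + 1 \cdot (2N+M-y) \leq 2N+M+3NM$, which simplifies to $y \leq 3NM/(x-1)$.

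The main obstacle I anticipate is the careful accounting in statement (1), particularly verifying the count $((3N-1)\tfrac{M}{2} - N)$ of non-peripheral TBUs, since the triangular lattice has three distinct TBU orientations whose boundary behavior differs, and the assumption that $M$ is even is what makes the left and right diagonal boundaries match up cleanly. A secondary subtlety is justifying the ``no modular obstruction'' claim in (2): I would need to formalize the observation that the triangular geometry admits enough local reconfigurations at the head of a zigzag path to realize every integer length, whereas in the square case the rigid alternation pattern blocks certain residues modulo $4$ for some path types.
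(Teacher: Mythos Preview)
Your proposal is correct and follows essentially the same approach as the paper, which for Theorem~\ref{thm:triangular} gives no explicit proof at all and merely states that the construction is similar to the parallelogram hexagonal case (itself referred back to the square-mesh arguments). You have accurately reconstructed the intended line of reasoning: direct enumeration for (1), the pigeonhole upper bound plus a zigzag cell-by-cell construction for (2), the edge-count/closed-loop decomposition for (3) together with the $(\mu-a)(b-\mu)$ variance bound, and the path-sum inequality for (4); the one place you should tighten is the justification that every untraversed closed loop has length divisible by $3$, since ``the minimal loop has length $3$'' does not by itself force all loop lengths to be multiples of $3$---you will want an argument in the spirit of the square-mesh `even number of V's and H's' parity count, adapted to the three TBU orientations of the triangular lattice.
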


\end{document}